\newtheorem{theorem}{Theorem}
\newtheorem{lemma}[theorem]{Lemma}
\newtheorem{definition}[theorem]{Definition}
\newtheorem{observation}[theorem]{Observation}
\newtheorem{remark}[]{Remark}
\newcommand{\TT}{\mathcal{T}}
\newcommand{\LL}[1]{\log^{(#1)} n}
\newcommand{\EL}{\ell}
\newcommand{\NN}{\mathcal{N}}
\newcommand{\HH}{\mathcal{H}}
\newcommand{\SSS}{\mathcal{S}}
\newcommand{\DD}{\text{deg}}
\newcommand{\CC}{\textsc{Color}}
\newcommand{\DFS}{\textsc{Dfs}}
\newcommand{\PU}{\textsc{push}}
\newcommand{\PO}{\textsc{pop}}
\newcommand{\GR}{\textsc{gray}}
\newcommand{\WH}{\textsc{white}}
\newcommand{\BL}{\textsc{black}}
\newcommand{\REE}{\textsc{Restore-Empty}}
\newcommand{\REF}{\textsc{Restore-Full}}
\newcommand{\GRO}{\textsc{group}}
\newcommand{\IL}[1]{#1\text{-light}}
\newcommand{\IH}[1]{#1\text{-heavy}}
\newcommand{\IS}[1]{#1\text{-segment}}
\newcommand{\IN}{\textsc{Insert}}
\newcommand{\DL}{\textsc{Delete}}
\newcommand{\GET}{\textsc{Search}}
\newcommand{\SEG}{\textsc{Seg}}
\newcommand{\NXT}{\textsc{Next}}
\newcommand{\UU}{\mathcal{U}}
\newcommand{\ME}{\textsc{Member}}
\title{Nearly Optimal Space Efficient Algorithm for Depth First Search}
\author{Jayesh Choudhari$^1$ \and Manoj Gupta$^1$ \and Shivdutt Sharma$^1$}
\date{
	$^1$IIT Gandhinagar, India\\%
}
\begin{document}

\maketitle
\begin{abstract}
We design a space-efficient algorithm for performing depth-first search traversal($\DFS$) of a graph in $O(m+n\log^* n)$
time using $O(n)$ bits of space. While a  normal
$\DFS$ algorithm results in a $\DFS$-tree (in case the graph
is connected), our space bounds do not permit us even to store such a tree. However, our algorithm correctly outputs all edges of the $\DFS$-tree.

 The previous best algorithm (which used $O(n)$ working space) took $O(m \log n)$ time (Asano, Izumi, Kiyomi, Konagaya, Ono, Otachi, Schweitzer, Tarui, Uehara (ISAAC 2014) and 
Elmasry, Hagerup, Krammer (STACS
2015)). The main open question left behind in this area was to design faster algorithm for $\DFS$ using $O(n)$ bits of space. Our algorithm answers this open question as it has a nearly optimal  running time\ (as the $\DFS$ takes $O(m+n)$ time even if there is no space restriction).    
\end{abstract}
 \section{Introduction}
In analyzing algorithms, mostly we concentrate on minimizing
the running time, or the quality of the solution (if the
problem is {\em hard}). After we have optimized the above parameters, we then look to reduce the space taken by the algorithm, if possible. An excellent theoretical question is: {\em Given a problem $P$, design an algorithm that solves it in as low space as possible}. These algorithms are called space-efficient algorithms as we want to optimize on the space taken by the algorithm while not increasing the running time by much (compared to the best algorithm for the problem with no space restriction).

Recently, designing space-efficient algorithms has gained importance because of the rapid growth in the use of mobile devices and other hand-held devices which come with limited memory (e.g., the devices like Raspberry Pi, which are widely used in IoT applications). Another crucial reason for the increasing importance of the space-efficient algorithms is the rate and the volume at which huge datasets are generated (``big data"). Areas like machine learning, scientific computing, network traffic monitoring, Internet search, signal processing, etc., need to process big data using as less memory as possible.   

Algorithmic fields like Dynamic Graph Algorithm \cite{BaswanaKS12,HolmLT01,RodittyZ08,RodittyZ12,Thorup07}
and Streaming algorithm \cite{ahn2012analyzing,ahn2012graph,bhattacharya2015space,sarma2011estimating,ahn2012graph,ahn2013linear,ahn2012analyzing} mandate low space usage by the algorithm. In a streaming
algorithm, the mandate is mentioned upfront. In
a dynamic graph algorithm, this mandate is implied as we want
the {\em update time} of the algorithm to be as low
as possible. Low update time implies that we don't have
enough time to look at our data-structure. Thus, we want
our data-structure to be as compact as possible.
Motivated by the growing body of work in the field of space-efficient algorithms, this paper focuses on optimizing the space taken by the DFS algorithm, which is one of the fundamental graph algorithms.

However, one needs to be slightly cautious about the definition of space. For a graph problem, it would take $O(m+n)$ space
just to represent the graph. So, it seems that any graph problem requires $\Omega(m+n)$ bits. To avoid such trivial answers, we first define our model of computation.

\subsection{Model of Computation : Register Input Model
\cite{Frederickson87}}
Frederickson \cite{Frederickson87} introduced the  \emph{register input model}  in which  the input (graph -- in this case) is given in a read-only memory (thus, it cannot be modified). Also the output of the algorithm  is written on a write-only memory. Along with the input and the output memory, a random-access memory of limited size is also available. Similar to the standard RAM model, the data on the input memory and the workspace is divided into words of size $\Theta(\log n)$ bits. Any arithmetic, logical and bitwise operations on constant number of words take $O(1)$ time.

When we say that our algorithm uses $O(n)$ bits, this is the space on the random-access memory used by our algorithm. The above model takes care of the case
when the input itself takes a lot of space --- by designating a special read-only memory for the input.

We highlight some  results that make use of the register input model. Pagter and Rauhe \cite{PagterR98} described a comparison-based algorithm for sorting $n$ numbers: for every given $s$ with $\log n \le s \le n/\log n $, an algorithm that takes  $O(n^2/s)$ time using $O(s)$ bits. A matching lower bound of $\Omega(n^2)$ for the time-space product was given by Beame \cite{Beame91} for the strong branching-program model. Please see references for other problems in this model \cite{Chan10,ChanMR13,ElmasryJKS14,MunroR96,RamanR99,AsanoBBKMR14,AsanoMRW11,BarbaKLS14,BarbaKLSR15,DarwishE14}. In this paper, our main focus is on the Depth First Search Problem.

\subsection{DFS Problem }
The problem of space efficient $\DFS$ has received a lot
of attention recently. Asano et al. \cite{Asano2014}
designed an algorithm that can perform  $\DFS$ in (unspecified) {\em polynomial
time} using $n + o(n)$ bits. If the space is increased to
$2n + o(n)$ bits then their running time decreases to $O(mn)$.
They also showed how to perform $\DFS$ in $O(m \log n)$
time using $O(n)$ bits. Elmasry et al. \cite{ElmasryHK15}
improved this result by designing an algorithm
that can perform $\DFS$ in $O(m+n)$ time using $O(n
\log \log n)$ bits. Banerjee et al.\cite{BanerjeeC016} proposed an efficient $\DFS$ algorithm that takes $O(m+n)$ time using $O(m+n)$
space. Note that  this is a strict improvement (over the Elmasry et al. \cite{ElmasryHK15}
 result) only if the
graph is sparse. The following open question was raised by Asano et al. \cite{Asano2014}
 in their paper: 
\begin{center} 
 {\em Using $O(n)$ space, can $\DFS$ be done in $o(m \log n)$ time?}
\end{center}

Recently, Hagerup \cite{Hagerup18} claimed an algorithm that finds $\DFS$ in $O(m \log^* n+n)$ time using $O(n)$ bits of space. We improve upon this algorithm giving a near optimal running time for $\DFS$ --- it is almost linear in $m+n$. Our result can be succinctly stated as follows: 

\begin{theorem}
\label{thm:main}
There exists a randomized algorithm that can perform $\DFS$
of a given graph in $O(m+n\log^* n)$ time with a high probability
($(1-1/n^c)$ (where $c \ge 1$)) using $O(n)$ bits of space.
(Note that our algorithm is randomized because we
use  succinct dictionaries that use random bits)\end{theorem}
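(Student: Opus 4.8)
The plan is to realise the \DFS{} by an iterative procedure that, at every moment, keeps an explicit but heavily compressed image of the DFS stack --- equivalently, of the root-to-current path in the partial DFS forest --- together with a three-colouring (white, gray, black) of the vertices. To restart the search at a gray vertex $v$ one needs the identity of $v$ and a cursor into $v$'s adjacency list recording how far its neighbourhood has been examined; maintaining all such cursors explicitly would cost $\Theta\!\big(\sum_{v\text{ gray}}\log\deg(v)\big)$ bits, which on dense instances is $\Theta(n\log n)$ and hence unaffordable. The first move, then, is to store the colour information succinctly: keep the gray and black vertex sets in succinct dictionaries supporting $\ME$, $\IN$, and $\DL$ in $O(1)$ time and $O(n)$ bits. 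This is the only place randomness enters, and the claimed high-probability guarantee $(1-1/n^{c})$ follows by a union bound over the $O(m+n)$ dictionary operations, each of which behaves as promised except with probability $n^{-\Theta(c)}$.

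The second and central move is to store the stack itself in $O(n)$ bits. I would cut the gray path into maximal contiguous \emph{segments} and spread the bookkeeping over $O(\log^{*} n)$ levels indexed by the iterated logarithm: level $i$ is responsible for the "heavy" vertices of a segment --- those whose degree lies above the level-$i$ threshold, so that a cursor for such a vertex is cheap relative to that threshold --- while the "light" vertices are deferred to level $i+1$; only a bounded portion of the stack (the topmost segment, handled in full detail) carries exact cursors at any instant. Whenever the search pops out of the fully detailed region into a region for which only the coarse level-$i$ summary survives, the routine $\REF$ (respectively $\REE$, when a segment empties out) rebuilds the missing cursors of that segment by re-running a \DFS{} confined to that segment's vertex set, which is why only the endpoint vertices of a summarised segment, together with the membership information already in the dictionaries, need to be remembered. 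The thresholds are pinned down by two competing demands --- that the surviving per-level data sum to $O(n)$ bits, and that each reconstruction touch a controlled amount of work --- and balancing these is exactly what forces an iterated-logarithmic number of levels.

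With the structure in place, correctness is the routine part: one verifies that $\REE$ and $\REF$ restore precisely the state the plain recursive \DFS{} would have, so that the maintained object is at every step a faithful prefix of a genuine DFS execution; the white-path characterisation of \DFS{} then guarantees that the edges emitted (each tree edge output once, when its lower endpoint is first discovered) are exactly the edges of a valid DFS tree of the input graph, even though the tree is never stored. For the running time I would account separately for three sources: advancing cursors by scanning adjacency lists contributes $O(m+n)$ overall, because outside reconstructions every edge is inspected $O(1)$ times; the succinct-dictionary operations contribute $O(1)$ each, hence $O(m+n)$ in total; and the reconstructions performed by $\REE$ and $\REF$, charged to the vertices whose cursors they rebuild and summed over the whole execution, contribute $O(n\log^{*} n)$. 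Adding these yields the stated $O(m+n\log^{*} n)$ bound.

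The step I expect to be the main obstacle is precisely this last amortised analysis: one must show that the cascade of reconstructions triggered by repeatedly popping across level boundaries does not re-scan adjacency lists more than a constant number of times on average (so the $m$-term stays linear) while the retained per-level summaries never exceed $O(n)$ bits in aggregate --- and these two goals push the thresholds, and the segment-splitting rule, in opposite directions, so the heart of the argument is choosing the hierarchy so that both hold at once, with the $\log^{*} n$ overhead the unavoidable residue. A secondary technical point is supporting deletions and a bounded number of rebuilds in the succinct dictionaries without breaking the $O(n)$-bit budget or the high-probability guarantee; this is handled by standard background-rebuilding arguments once the failure probability of a single dictionary is driven below $n^{-\Theta(c)}$.
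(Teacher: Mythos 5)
Your high-level architecture matches the paper's: a compressed image of the DFS stack, a hierarchy of $\log^* n$ levels with degree thresholds separating heavy from light vertices, restoration routines $\REE$/$\REF$, succinct dictionaries as the source of randomness, and an amortized analysis of reconstructions. But there is a genuine gap at exactly the point you flag as "the main obstacle," and it is not merely a matter of tuning thresholds. Your reconstruction step recovers a segment's cursors "by re-running a DFS confined to that segment's vertex set," using only the segment endpoints and the colour/membership dictionaries. For a light (low-degree) vertex $v$ this forces a scan of $\NN(v)$ to find the first gray neighbour, so one reconstruction of a level-$i$ segment costs $\Theta\bigl(\sum_{v\in\text{segment}}\deg(v)\bigr)$, which can be $\Theta(m)$; since level $i$ is reconstructed up to $(\LL{i})^2$ times, the total is far above $O(m+n\log^* n)$. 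The paper's fix — and the idea your proposal is missing — is that each level $i$ stores, for \emph{every} vertex in the top $\frac{2n}{(\LL{i})^2}$ positions of the stack, an $O(\LL{i})$-bit \emph{approximate position} of $v.\NXT-1$ (a group number, with group size $\le 1+\frac{m}{n\LL{i-1}}$ thanks to the heavy/light split), and the restoration of level $i$ reads these hints from level $i+1$, which covers a strictly larger prefix of the stack at coarser resolution. This makes one reconstruction of level $i$ cost $O\bigl((1+\frac{m}{n\LL{i}})\cdot\frac{n}{(\LL{i})^2}\bigr)$, i.e.\ $O(n+\frac{m}{\LL{i}})$ over all $(\LL{i})^2$ reconstructions, summing to $O(n\log^* n+m)$. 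Your accounting, which charges reconstructions only to the vertices whose cursors are rebuilt and concludes $O(n\log^* n)$, silently drops this adjacency-list term, which is where all the difficulty lives.

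A second, smaller omission: the recursion has to bottom out somewhere. The last level $\SSS_{\log^* n}$ has no coarser level to consult, and the paper restores it by re-running the DFS from the root over the whole graph, at cost $O(m+n)$ per occurrence; the analysis then needs the separate fact that this happens only $O(\alpha^2)=O(1)$ times. Your charging scheme does not cover this global re-scan. Also note your description "level $i$ is responsible for the heavy vertices of a segment, light vertices deferred to level $i+1$" inverts the actual structure: in the paper every stack vertex in the level-$i$ window gets a level-$i$ record, and the heavy/light distinction only decides whether the group hint lives in $\SSS_i$ itself or in the auxiliary dictionary $\HH_j$ (whose small cardinality, $O(n/(\LL{j-1})^2)$, is what keeps the space at $O(n)$ overall).
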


The succinct  dictionary (used by our algorithm)  performs insertion/deletion in $O(1)$ time with a
probability of $(1-1/n^c)$ (where $c \ge 3$). Our algorithm performs  at most $ O(n \log^* n+m)$ insertions/deletions across all  dictionaries. Hence, the probability that our algorithm takes more than $O(1)$ time for any of these $ O(n \log^* n+m)$ insertions/deletions is $O(1/n^{c-2})$ (by union bound). 

 \section{Overview}
\label{sec:overview}

We will assume that vertices of input graph $G$ are numbered from 1 to $n$. Let $\NN(v)$ denote the
neighborhood of the vertex $v$ and $\NN(v)[k]$ denote the $k$-th neighbor
of the vertex $v$, where $1 \le k \le |\NN(v)|$. As  in   \cite{ElmasryHK15},
we will assume that $\NN(v)$ is an array. So, we have random
access to any element in this array. Also, we implicitly
know the degree of $v$, $\DD(v) = |\NN(v)|$.

Normally, the $\DFS$ algorithm outputs the $\DFS$ tree. Given the space bounds, we cannot store the $\DFS$ tree, but, we output the edges of the $\DFS$ tree as soon as we encounter them.
We view that the problem
is solved if  the output edges form a  valid $\DFS$ tree. 

  We first give a quick overview of the non-recursive implementation
of the $\DFS$ algorithm. Let $G(V,E)$ be the input graph having $n$ vertices and $m$ edges.
For this implementation, we will use a stack $\SSS$.
Initially, all  vertices are colored white and assume that we
start the $\DFS$ from a vertex $u$.
So, $u$ is added to the stack $\SSS$. The algorithm then processes all  elements of the stack till it becomes empty. Thus, the top vertex, say $u$, is popped from the stack and is {\em processed} as follows: each neighbor of $u$ is explored. If a white vertex $v$ is found, then $u$ is pushed on to the stack and {\em processing} of $v$ starts. If none of the neighbors of vertex $u$ are white, then $u$ is colored black.
 Whenever $u$ discovers a white vertex
$v$, we push a tuple $(u,u.\NXT)$ on to $\SSS$, where
the second entry in the tuple tells us which neighbor of vertex $u$ to explore
once processing of $u$ resumes.

Now, let us formally define the second entry in the tuple $(u,u.\NXT)$

\begin{definition}
For any vertex $u$, if $(u,u.\NXT)$ is an entry on the stack $\SSS$, then $\NN(u)[u.\NXT]$ denotes the first neighbor of the vertex $u$ which is still not explored while processing $u$.

\end{definition}

\begin{algorithm}
  \caption{ \textsc{Initialize}()}
  \label{alg:imag}

  \For{$ i \leftarrow 1$ to n}
  {
      $\CC(i) \leftarrow \WH$\;
  }

  \ForEach{$u \in V$}
  {
      \If{$u$ is $\WH$}
      {
           \textsc{Process}($u)$\;
      }
  }
\end{algorithm}

\begin{algorithm}[h]

  \caption{ \textsc{Process}$(u)$}
  \label{}
  $\SSS.\textsc{push}(u,1)$\;

   \While{$\SSS$ is not empty }
  {

     $(v,k) \leftarrow $ top element of $\SSS$\;

     $\CC(v) \leftarrow \GR$\;
     \tcc{scan neighbors of $v$}
     \If{$k \le \DD(v)$}
     {
              $\SSS.\textsc{push}(v,k+1)$

         \If{ $\CC(\NN(v)[k])$ is \WH }
         {
                 output edge $(v,\NN(v)[k])$\;

                 $\SSS.\textsc{push}(\NN(v)[k],1)$\;

         }

     }
     \Else
     {
         $\CC(v) \leftarrow \BL$ \;
     }

  }
\end{algorithm}

The space
required to represent the first and second term of each tuple
in the stack  $\SSS$ is
$O(\log
n)$ bits. As there are $n$  vertices in the graph, the
size of the stack can reach $\Omega(n)$ in the worst case.
So, the
total space taken by the trivial algorithm is $O(n \log
n)$ bits.

Our algorithm closely follows \cite{ElmasryHK15}. So, we first give a brief overview
of their approach and later, we will explain our improvement over their approach.

\subsection{Previous Approach (Elmasry et. al. \cite{ElmasryHK15})}
The trivial $\DFS$ algorithm does not work for Elmasry et al.\cite{ElmasryHK15} because the stack $\SSS$ itself takes  $O(n \log n)$ bits of space.
Hence, stack $\SSS$ is not implemented --- but, is referred to as an {\em imaginary} stack.
Let the stack $\SSS$ be  divided into segments of size $\frac{n}{\log n}$ --- the first segment is the bottommost $\frac{n}{\log n}$ vertices of $\SSS$, the second segment is the next $\frac{n}{\log n}$ vertices of $\SSS$ and so on. A new stack $\SSS_1$ is implemented, which contains vertices from at most top two segments of the imaginary stack $\SSS$. Each entry of the stack $\SSS_1$ is a tuple: $(v,v.\NXT)$ where $v \in V$. The space required to represent these two terms is at most $2\log n$.  Thus, the total space required for $\SSS_1$ is $O(\frac{n}{\log n} \times \log n) = O(n)$ bits. Since,
the size of $\SSS_1$ is very small as compared to the imaginary
stack $\SSS$, the main problem arises when an element
is to be pushed on $\SSS_1$ but it is full or when $\SSS_1$ becomes empty (but $\SSS$ contains vertices). Thus, there is a need to make space in $\SSS_1$ or a way to {\em restore} vertices in $\SSS_1$.

To handle the case when $\SSS_1$ is full, Elmasry et al.\cite{ElmasryHK15} remove the bottom half elements of $\SSS_1$. So, a new entry can now be pushed on to $\SSS_1$, and the $\DFS$ algorithm can proceed as usual.

Handling the second case (when $\SSS_1$ is empty) requires to restore the top segment of $\SSS$ in $\SSS_1$. It turns out that the restoration process is the main bottleneck of this $\DFS$ algorithm. To aid the restoration process, Elmasry et al.\cite{ElmasryHK15} propose an elegant solution by maintaining an additional stack $\TT$, called a trailer stack. The top-most element of each segment in $\SSS$ is called as a \emph{trailer} element. The stack $\TT$ stores the trailer element of each segment in $\SSS$ -- except trailers of those segments which are already present in $\SSS_1$.

The stack $\TT$ is crucially used in the restoration process. Let $(u,u.\NXT)$ be the second top most entry in stack $\TT$. This implies that the first vertex of top segment of $\SSS$ is $\NN(u)[u.\NXT-1]$. Now, a $\DFS$-like algorithm is run starting from the vertex $\NN(u)[u.\NXT-1]$ to restore the top segment of $\SSS$ in $\SSS_1$ as follows: 

{\em Temporarily the meaning of $\GR$ and $\WH$ vertex is changed. Then, process $v \leftarrow \NN(u)[u.\NXT-1]$ to find $(v.\NXT-1)$ as follows: find the first $\GR$ neighbor $w$ of $v$, mark it $\WH$, push $(v,\ell+1)$ (where $\NN(v)[\ell]=w$),
and then start processing of $w$. Elmasry et al. \cite{ElmasryHK15} show that this restoration process correctly restores the top segment of $\SSS$.}

Some explanation is in order about the above procedure. Once we have found $v$, we want to find $v.\NXT$. Analogously, we can say that we want to find $v.\NXT-1$. This vertex, $w \leftarrow \NN(v)[v.\NXT-1]$, was a  $\WH$ vertex encountered while processing $v$. Due to $w$, we stopped the processing of $v$, put $(v,v.\NXT)$ on $\SSS$ and start the processing of $w$.

Even though the above algorithm is correct, it is still slow. Finding the first $\GR$ neighbor of a vertex $v$ takes  $O(\DD (v))$ time. To overcome this difficulty, Elmasry et al.\cite{ElmasryHK15} suggest the use of two more data-structures. The first data-structure $D$ is an array of size $n$ that contains the following information for each vertex $v$: if $v$ is an element of $\SSS$, then $D(v)$ contains
\begin{itemize}
  \item The segment number in which $v$ lies.
  \item The approximate position of  $v.\NXT-1$ in $\NN(v)$.
\end{itemize}

Since there are $\log n$ segments of $\SSS$ (as each segment is of size $O(n/\log n))$,  it requires $\log \log n$ bits to represent the first quantity. Similarly, storing the approximate position also takes $O(\log \log n)$ bits. Thus the space required for $D$ is $O(n \log \log n)$ bits.

The second term in $D(v)$ helps to fasten the search process for $v.\NXT-1$ only if the degree of $v$ is sufficiently small. However, to take care of  high degree vertices, the trailer stack $\TT$ is extended to include not only  trailers but also all  the pair $(u,u.\NXT)$, where $u$ is a high degree vertex. Finally, Elmasry et al. \cite{ElmasryHK15} show that the extended trailer stack $\TT$ takes $O(n)$ bits. Moreover, using $D$ and the extended $\TT$ restores  $\SSS_1$ correctly and efficiently.
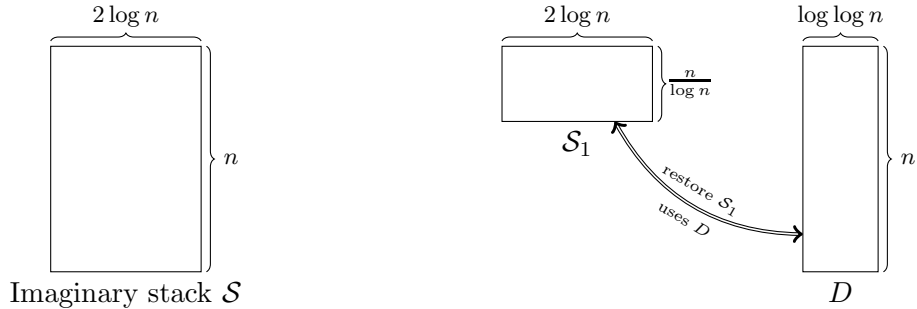
\begin{figure}
\begin{tikzpicture}
\draw [draw=black] (0,0) rectangle (2,3);
\node[below] at (1,0) {Imaginary stack $\SSS$};

\draw [decorate,decoration={brace,amplitude=3pt,mirror,raise=2pt},yshift=0pt]
(2,3) -- (0,3) node [black,midway,yshift=0.4cm] {\footnotesize
$2\log n$};

\draw [decorate,decoration={brace,amplitude=3pt,mirror,raise=2pt},yshift=0pt]
(2,0) --(2,3)   node [black,midway,xshift=0.4cm] {\footnotesize
$ n$};
\draw [draw=black] (6,2) rectangle (8,3);
\node[below] at (7,2) { $\SSS_1$};
\draw [decorate,decoration={brace,amplitude=3pt,mirror,raise=2pt},yshift=0pt]
(8,3) -- (6,3)  node [black,midway,yshift=0.4cm] {\footnotesize
$2\log n$};

\draw [decorate,decoration={brace,amplitude=3pt,mirror,raise=2pt},yshift=0pt]
(8,2) --(8,3)   node [black,midway,xshift=0.5cm] {\footnotesize
$ \frac{n}{\log n}$};

\draw [draw=black] (10,0) rectangle (11,3);
\node[below] at (10.5,0) { $D$};
\draw [decorate,decoration={brace,amplitude=3pt,mirror,raise=2pt},yshift=0pt]
(11,3) -- (10,3)  node [black,midway,yshift=0.4cm] {\footnotesize
$\log \log n$};

\draw [decorate,decoration={brace,amplitude=3pt,mirror,raise=2pt},yshift=0pt]
(11,0) --(11,3)   node [black,midway,xshift=0.4cm] {\footnotesize
$n$};

\draw [<->,double](7.5,2) to[bend right=30] node[above,rotate=-30] {\tiny{restore $\SSS_1$}} node[below,rotate=-30]
{\tiny{uses $D$}}
(10,0.5);
\end{tikzpicture}
\caption{A pictorial description of the approach in \cite{ElmasryHK15}. The restoration of stack $\SSS_1$ depends on $D$. The size of $D$ is $O(n \log \log n)$ and our aim is to reduce this size. Note that all the data-structures \cite{ElmasryHK15} are not shown in the figure.}
\end{figure}

\subsection{Our Approach}
We give a brief overview of our approach. In \cite{ElmasryHK15}, the array $D$ plays a critical role in the restoration process. While restoring the top segment, $D(v)$ provides the required information for each vertex $v$ which is a part of the top-most segment.  However, $D(v)$ takes $O(n \log \log n)$ bits -- a space we cannot afford. Our main observation is that we do not require information related to all vertices while restoring $\SSS_1$. Indeed, storing information about  vertices in the top-most segment  suffices. Unfortunately, it is not easy to keep information related to vertices in top-most segment efficiently in $O(n)$ space. To overcome this difficulty, along with the stack $\SSS_1$\footnote{In our algorithm, size of $\SSS_1$ is bit different than that in \cite{ElmasryHK15}. It is mentioned in Remark \ref{rem:remark1}} we implement $\SSS_2$ (a dynamic dictionary -- as described in Lemma \ref{lem:dict}) which contains  information about top vertices  $\frac{2n}{(\log \log n)^2}$ vertices of the imaginary stack $\SSS$. For each  vertex in $\SSS_2$,
we store $O(\log \log n)$ bits of information that will help us when we restore $\SSS_1$ (remember that the size of $\SSS_1$ is much less that the size of $\SSS_2$). We can show that the   size of $\SSS_2$ is $\approx O\Big(\frac{2n}{(\log \log n)^2} \times \log \log n\Big) = O\Big(\frac{n}{\log \log n}\Big)$ bits. Thus, we have successfully reduced the size of $\SSS_2$ (named $D$ in \cite{ElmasryHK15}).

Since $\SSS_2$ does not store the information of all the vertices in  stack $\SSS$, it faces the restoration problem as well. If   top $\frac{2n}{(\log \log n)^2}$ vertices are popped out of $\SSS$, those are also deleted from $\SSS_2$. Thus, we need to {\em restore} $\SSS_2$. To aid in the restoration of $\SSS_2$, we implement another data-structure $\SSS_3$, which contains the information  top $\frac{2n}{(\log \log \log n)^2}$ vertices of $\SSS$. For each vertex in $\SSS_3$, we will store $O(\log \log \log n)$ bits of information. The size of $\SSS_3$ can be shown to be $O(\frac{n}{\log \log \log n})$ bits. It is not hard to see that this process goes on recursively and we have many data-structures $\SSS_i$ where the last data-structure is $\SSS_{\log^*n}$.  $\SSS_{\log^*n}$ stores information about   $\frac{2 n}{\alpha^{2}}$ vertices, where $\alpha \ge 1$ is some constant. But, the restoration problem does not disappear yet. Now the question is how do we restore $\SSS_{\log^*n}$? Beyond this, we do not create any more data-structure. We restore $\SSS_{\log^*n}$ using the most trivial strategy, that is by running $\DFS$ all over again. Our main claim is that throughout our algorithm  $\SSS_{\log^* n}$ is restored at most $\alpha^2 $ times. We will show that the time taken to restore $\SSS_{\log^* n}$ is $O(m+n)$. Thus the total time taken to restore $\SSS_{\log^* n}$ is $O(\alpha^2(m+n)) = O(m+n)$ (since $\alpha$ is a constant). For other $\SSS_i$'s ($i \neq \log^* n$), our analysis is slightly different and it is the main technical contribution of this paper. We will show that the total time taken to restore  $\SSS_i$ over the entire course of the algorithm 
is $O(\frac{m}{\LL{i}}+n)$ where $\LL{i}  := \underbrace{\log \log \log \dots \log}_{i
\  \text{times}} n$. Thus, the time taken to restore all
$\SSS_i$'s over the entire course of the algorithm is $O(m + n \log^* n)$.

Let us now briefly describe the space taken by our algorithm.
Each $\SSS_i$ stores information about at most top $\frac{2n}{(\LL{i})^2}$ vertices of $\SSS$. Also, for each
such vertex, we will only store $O(\LL{i})$ bits.
Using succinct dictionary \cite{DemaineAPP06}, we will show that we can implement $\SSS_i$ in $O(\frac{n}{\LL{i}})$
space. Thus, the total space taken by our algorithm is $O\Big(\sum_{i=1}^{\log^* n} \frac{n}{\LL{i}}\Big) = O(n)$ bits.
Note that our algorithm will also use some other data-structures which we have not described till now. However, the
major challenge in our work was to bound the size of $\SSS_i$'s. All our other data-structures
 take $O(n)$ bits cumulatively. Thus, the total space taken by our algorithm is $O(n)$ bits.
This completes the overview of our algorithm. \\

\begin{figure}
\begin{tikzpicture}

\draw [draw=black] (6,2) rectangle (8,3);
\node[below] at (7,2) { $\SSS_1$};
\draw [decorate,decoration={brace,amplitude=3pt,mirror,raise=2pt},yshift=0pt]
(8,3) -- (6,3)  node [black,midway,yshift=0.4cm] {\footnotesize
$2\log n$};

\draw [decorate,decoration={brace,amplitude=3pt,mirror,raise=2pt},yshift=0pt]
(8,2) --(8,3)   node [black,midway,xshift=0.7cm] {\footnotesize
$ \frac{2n}{(\log n)^2}$};

\draw [draw=black] (10,1.5) rectangle (11,3);
\node[below] at (10.5,1.5) { $\SSS_2$};
\draw [decorate,decoration={brace,amplitude=3pt,mirror,raise=2pt},yshift=0pt]
(11,3) -- (10,3)  node [black,midway,yshift=0.4cm] {\footnotesize
$\log \log n$};

\draw [decorate,decoration={brace,amplitude=3pt,mirror,raise=2pt},yshift=0pt]
(11,1.5) --(11,3)   node [black,midway,xshift=0.8cm] {\footnotesize
$\frac{2n}{(\log \log n)^2}$};

\draw [<->,double](7.5,2) to[bend right=30] node[above,rotate=-4]
{\tiny{restore $\SSS_1$}} node[below,rotate=-4]
{\tiny{uses $\SSS_2$}}
(10,1.8);

\draw [draw=black] (13,1) rectangle (14,3);
\node[below] at (13.5,1) { $\SSS_3$};
\draw [decorate,decoration={brace,amplitude=3pt,mirror,raise=2pt},yshift=0pt]
(14,3) -- (13,3)  node [black,midway,yshift=0.4cm] {\footnotesize
$\log \log \log n$};

\draw [decorate,decoration={brace,amplitude=3pt,mirror,raise=2pt},yshift=0pt]
(14,1) --(14,3)    node [black,midway,xshift=1cm] {\footnotesize
$\frac{2n}{(\log \log \log  n)^2}$};

\draw [<->,double](10.8,1.5) to[bend right=30] node[above,rotate=-4]
{\tiny{restore $\SSS_2$}} node[below,rotate=-4]
{\tiny{uses $\SSS_3$}}
(13,1.2);

\draw [<-,double](13.8,1) to[bend right=30]
(15,0.5);

\draw [loosely dotted] (16,2) -- (18,2);

\draw [draw=black] (19,0) rectangle (20,3);
\node[below] at (19.5,0) { $\SSS_{\log^* n}$};
\draw [decorate,decoration={brace,amplitude=3pt,mirror,raise=2pt},yshift=0pt]
(20,3) -- (19,3)  node [black,midway,yshift=0.4cm] {\footnotesize
$\alpha$};

\draw [decorate,decoration={brace,amplitude=3pt,mirror,raise=2pt},yshift=0pt]
(20,0) --(20,3)    node [black,midway,xshift=0.4cm] {\footnotesize
$\frac{2n}{\alpha^2}$};

\draw [->,double](17.8,0.5) to[bend right=30]
(19,0.2);

\end{tikzpicture}
\caption{A pictorial description of our approach.
We implement many "stacks" $S_1, S_2, \dots, S_{\log^* n}$.
The restoration of $\SSS_i$ uses $\SSS_{i+1}$ as $\SSS_{i+1}$ contains the vertices to be restored in $\SSS_i$.   Note that all our data-structures 
are not shown in the figure.}
\end{figure}
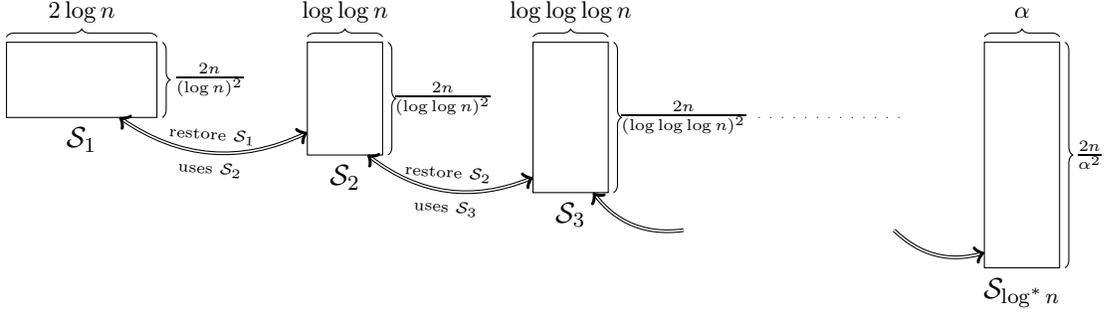
\begin{remark}
	\label{rem:remark1}
In the above description, each $\SSS_i$ contains at most
top $\frac{2n}{(\LL{i})^2}$ elements of $\SSS$. Thus, the size of $\SSS_1$ is $\frac{2n}{(\log n)^2}$. This is a crucial difference from the Elmasry et al. \cite{ElmasryHK15} algorithm, where the size of $\SSS_1$ was $\frac{2n}{\log n}$. The main reason for this change is to decreases the space taken by our algorithm. Indeed, the cumulative space taken by all $\SSS_i$'s (in our algorithm)  can be shown to be  $\sum \frac{n}{\LL{i}} =O(n)$. In spite of this change, the running time of our algorithm does not suffer. To summarize, this is an important technical change from the previous work with the sole aim to decrease the space taken by the algorithm.
\end{remark}

 \section{Preliminaries}
\label{sec:prelim}
  In our algorithm, the following  data-structure plays a crucial role.
  \begin{lemma} (Succinct Dynamic Dictionary \cite{DemaineAPP06})
  \label{lem:dict}
  Given a universe ~$\UU$ of size $u$, there exists a dynamic
  dictionary that stores a subset $\SSS \in \UU$ of size
  at most $n$. Each element of $\UU$ has a satellite data
  of size $r$ where $r \in O(\log n)$.
The time taken for membership, retrieval, insert, and delete any element
   (and its satellite data) is  $O(1)$ with probability $(1-1/n^c)$ for some chosen constant $c$.
  The space taken by the data-structure is $ n \log \frac{u}{n}
  + nr$ bits.
  \end{lemma}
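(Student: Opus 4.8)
The plan is to reproduce the construction of Demaine et al.\ \cite{DemaineAPP06}, whose timing and correctness guarantees we then quote verbatim. The information-theoretic optimum for a membership structure ranging over the $\binom{u}{n}$ possible subsets is $\log\binom{u}{n}=n\log\frac{u}{n}+O(n)$ bits, so the budget allows roughly $\log\frac{u}{n}+O(1)$ bits per stored key together with its $r$ bits of satellite data; the real work is to achieve this \emph{dynamically} in constant time. First I would shrink the universe by \emph{quotienting}: fix a power of two $B=\Theta(n)$ and a hash function $h$ drawn from a family that, for our purposes, behaves like a uniformly random \emph{injection} of $[u]$ into a range of size $\Theta(u)$ --- crucially $h$ is invertible on its image, so a key is recoverable from its image. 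Writing $h(x)=(a(x),q(x))$ with $a(x)\in[B]$ the bucket address and $q(x)$ the remaining $\log\frac{u}{B}=\log\frac{u}{n}+O(1)$ bits (the ``quotient''), we maintain a table of $B$ slots and store, in a slot near position $a(x)$, the pair $(q(x),\text{satellite data of }x)$; since $h$ is invertible, $a(x)$ together with $q(x)$ determines $x$, so nothing is lost despite storing only $\log\frac{u}{n}+O(1)$ key-bits per element.

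The second step is collision handling. Several keys may share a bucket address, so keys whose home addresses are consecutive are laid out in a contiguous run of slots (linear probing), and each slot additionally carries $O(1)$ bookkeeping bits --- ``is this slot occupied'', ``does its content continue the previous run'', ``has this run been shifted from its home position'' --- exactly as in Cleary's compact hashing / the quotient-filter layout. A lookup for $x$ computes $a(x)$, scans the relevant run, and matches stored quotients against $q(x)$; an insertion or deletion shifts a suffix of a run and updates the bookkeeping bits. With a sufficiently random hash family the runs have only polylogarithmic length with high probability, which already yields $o(n)$-time operations but not $O(1)$.

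The third step --- the genuinely hard part, and the reason the result is randomized --- upgrades this to $O(1)$ time with failure probability $1/n^c$. One adds a top-level hash distributing keys into $\Theta(n/\operatorname{polylog}n)$ ``pages'', each holding $O(\operatorname{polylog}n)$ keys stored in the compact quotiented layout above; whenever a page's load, or the run structure inside it, crosses a threshold the page is rebuilt from scratch in time linear in its polylogarithmic size, a standard potential/charging argument makes this $O(1)$ amortized, and incremental rebuilding de-amortizes it. The satellite data rides inside the slots, contributing $nr$ bits; the $O(1)$ bookkeeping bits per slot plus the slack from keeping the load factor bounded away from $1$ sum to $O(n)$; hence the total space is $n\log\frac{u}{n}+nr+O(n)$ bits, i.e.\ the lemma's bound up to the additive $O(n)$ (negligible in all our applications, where $u=n$ and $\sum_i O(n_i)=O(n)$). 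The obstacle is to reconcile the three requirements at once: keep the hash family strong enough that pages and runs stay short with probability $1-1/n^c$, keep the per-slot overhead down to $O(1)$ bits so the $\log\frac{u}{n}$-per-key space target is met, and rebuild pages fast enough that updates remain constant time.
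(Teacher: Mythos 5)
The paper does not actually prove this lemma: it is imported verbatim as a black box from \cite{DemaineAPP06} (with a pointer to the similar Lemma~2.1 of \cite{ElmasryHK15}), so there is no in-paper argument to compare yours against. Judged on its own terms, your sketch is a faithful outline of the standard route to such bounds --- quotienting the universe through an invertible hash to pay only $\log\frac{u}{n}+O(1)$ key-bits per element, compact/linear-probing layout with $O(1)$ bookkeeping bits per slot, and a page-level decomposition with rebuilding to get constant-time updates --- and your space accounting correctly lands at $n\log\frac{u}{n}+nr+O(n)$ bits, which is what the lemma really means (the statement omits the additive $O(n)$, but that term is harmless in every use the paper makes of the lemma, since each $\SSS_i$ holds $o(n)$ elements).

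If this were meant to stand as a proof rather than a citation, two steps would need real work. First, ``a hash family that behaves like a uniformly random injection'' cannot be taken for granted: a truly random permutation of $[u]$ is not storable in $o(u)$ bits, so one must exhibit an explicit, $O(1)$-time evaluable and invertible permutation family with enough independence to control run and page lengths, and this choice is exactly where the failure probability $1-1/n^{c}$ comes from. Second, the upgrade from ``amortized expected $O(1)$'' to ``$O(1)$ per operation with probability $1-1/n^{c}$'' is the technically hardest part of \cite{DemaineAPP06} and is only asserted in your third step; the potential/charging argument plus incremental rebuilding needs to be carried out so that no single operation is ever charged more than constant work, while simultaneously keeping the per-slot overhead at $O(1)$ bits. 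Naming these as the open obstacles is fair, but they are precisely the content of the cited result, so your proposal should be read as a correct roadmap to the external proof rather than a replacement for it.
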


Note that a similar dictionary was also described  in
Lemma 2.1 of \cite{ElmasryHK15}.

We define few basic notation/data-structures that will be used in the ensuing discussion.

\begin{itemize}
\item $\LL{i}  := \underbrace{\log \log \log \dots \log}_{i \  \text{times}} n$.

\item $\log^* n$ (iterated logarithm) is the
number of times the logarithm function is iteratively applied till the result is $\le 2$.
 Define $\alpha := \underbrace{\log \log \dots  \log}_{\log^* n \ \text{times}} n = \LL{\log^* n}$. Note that $1 < \alpha  \le 2.$
\item We divide the imaginary stack $\SSS$ into segments of size $\Big\lceil\frac{n}{(\log n)^2}\Big\rceil$. An $\IS{i}$ ($1 \le i \le \log^* n$) contains vertices
of  $\Big(\frac{\log n}{\LL{i}}\Big)^2$ consecutive segments
of $\SSS$. We divide the imaginary stack $\SSS$ into $\IS{i}$s
from
bottom to top (only the topmost $\IS{i}$ may contain less
number of consecutive segments).  The total number of vertices
in an $\IS{i}$ is at most
$\Big\lceil \Big(\frac{\log n}{\LL{i}^{}}\Big)^2\times \frac{n}{(\log n)^2}\Big\rceil = \Big\lceil\frac{n}{(\LL{i})^{2}}\Big\rceil$
and the total number of $\IS{i}$s is at most $(\LL{i})^{2}$.
For brevity, we will drop the ceil notation in the rest of the paper.

\item Stack $\SSS_1$

  A stack $\SSS_1$ will store the vertices present in at most top
  two segments of $\SSS$. Each cell of $\SSS_1$ contains
the
  tuple of type $(v,v.\NXT)$.

  \item Dynamic  Dictionary for $\SSS_i$ $(2 \le i \le \log^* n)$

 We will  store information about vertices of at most top  two $\IS{i}$ in a dynamic dictionary   $\SSS_i$ $(2 \le i \le \log^*n )$.\ This information will be crucial in restoring $\SSS_{i-1}$.

 \item Trailers

  In \cite{ElmasryHK15}, the restoration algorithm uses the trailer stack to find a  vertex from which the restoration of $\SSS_1$ should start. In our algorithm, as we have to restore $\SSS_1, \SSS_2, \dots , \SSS_{\log^* n}$, we require many trailer stacks.

 To this end, we implement a trailer stack for each $\SSS_i$.
  In the trailer stack $\TT_i$ ($1 \le i \le (*))$, we keep
  the bottommost element of the imaginary stack $\SSS$ and the
  top vertex of all  $\IS{i}$s of  $\SSS $  that are not present in $\SSS_i$.

\end{itemize}

\section{Our Algorithm}
\label{sec:algo}

Our algorithm is nearly similar to the  Elmasry et al.\cite{ElmasryHK15}
algorithm.  We initially color all the vertices white
(the space taken by the $\CC$ array is $O(n) $ bits as we
color a vertex $\WH$, $\GR$ or $\BL$ only).
Then we take an arbitrary vertex, say $u$, and do a $\DFS$
from $u$. Like Elmasry et al.\cite{ElmasryHK15}, initially
$(u,1)$ is pushed on to the stack. Additionally, we  also
 insert $(u,1)$ to all other $\SSS_i$'s.

\begin{algorithm}[h]
  \caption{ $\DFS(u)$ }
  \label{alg:maindfs}
  \For{$i \leftarrow \log^* n$ to 1}
  {
      $\IN(u,i,1)$\;
  }

  \While{trailer $\TT_1$ is not empty }
  {

     $(v,v.\NXT) \leftarrow $ top element of $\SSS_1$\;
     \For{ $i \leftarrow \log^* n$ to 1}
     {
         $\DL(v,i)$\;
     }

     $\CC(v) \leftarrow \GR$\;
     \While{$v.\NXT \le \DD(v)$}
     {

         \If{ $\CC(\NN(v)[v.\NXT])$ is \WH }
         {
             \For{$i \leftarrow \log^* n$ to 1}
             {

                 $\IN(v,i,v.\NXT+1)$\;
                 $\IN(\NN(v)[v.\NXT],i,1)$\;
             }
             break\;
         }
         \Else
         {
             $v.\NXT \leftarrow v.\NXT+1$\;
         }

     }
     \Else
     {
         $\CC(v) \leftarrow \BL$ \;
     }

  }
\end{algorithm}

We then go over the stack $\SSS_1$ till it becomes empty.
Analogously, we can say that we will process the stack $\SSS_1$
till the trailer $\TT_1$ becomes empty --- as $\TT_1$
always contains the bottommost element of the imaginary stack
$\SSS$. Our $\textsc{While}$ loop is similar to the standard
$\DFS$ algorithm with the addition that we push and pop
not only to $\SSS_1$ but insert to and delete from all $\SSS_i$'s.
Let  $(v,v.\NXT)$ be the top element of $\SSS_1$. We pop $v$ from $\SSS_1$ and also delete
it from all other $\SSS_i$'s.
Then we color $v$ gray. 
 We then check if the $(v.\NXT)$-th
neighbor of $v$, $\NN(v)[v.\NXT],$ is white or not. If it
is white, then we first push $v$ back on to the stack (and all other $\SSS_i$'s). After that, $\NN(v)[v.\NXT]$ is pushed to $\SSS_1$ and all the other
relevant data-structure.
When we have processed all the neighbors of $v$, it is colored
$\BL$.

We now calculate the running time of our $\DFS$ algorithm in Algorithm \ref{alg:maindfs}.
In the classical $\DFS$ algorithm,  a $\GR$
vertex is pushed  onto the stack again after it finds a new $\WH$ vertex. This implies that vertices can be pushed on to the stack at most $O(n)$ times.
Our $\DFS$ algorithm is nearly similar to the classical
$\DFS$ algorithm with the only difference that we insert/delete
into $\log^* n$ ``stacks" instead of one.   Thus we claim
the following running time:

\begin{lemma}
\label{lem:maindfs}
Not accounting for the time taken by $\IN$ and $\DL$ procedures,
the  time taken by our $\DFS$ algorithm in Algorithm \ref{alg:maindfs} is $O(m+n\log^* n)$.
\end{lemma}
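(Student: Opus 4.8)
The plan is to charge the running time of Algorithm~\ref{alg:maindfs} to the classical $\DFS$ bound, paying an extra factor of $\log^* n$ only for the per-vertex bookkeeping. First I would observe that the outer \textsc{For} loop over $i$ (lines 1--2) runs exactly $\log^* n$ iterations and is executed once at the start, contributing $O(\log^* n)$ time; this is dominated by $O(n\log^* n)$. The substance is the main \textsc{While} loop. I would argue that its body is entered, across the whole execution, $O(m+n)$ times by exactly the same accounting as for the non-recursive $\DFS$ of Algorithm~2: each execution of the body either (a) discovers a fresh $\WH$ vertex, pushes the current $(v,v.\NXT)$ back and the new vertex, and breaks out of the inner \textsc{While}, or (b) advances $v.\NXT$ past a non-$\WH$ neighbor, or (c) reaches $v.\NXT>\DD(v)$ and colours $v$ black. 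Event (a) happens at most $n$ times (once per tree edge, i.e.\ once per vertex that gets discovered), event (c) happens at most $n$ times (once per vertex when it is finished), and event (b) happens at most once per \emph{edge} inspection; since every edge $(v,w)$ of the imaginary stack's scan is inspected $O(1)$ times in this amortised sense, the total number of (b) steps is $O(m)$. Hence the \textsc{While} loop body, together with all the inner \textsc{While} iterations it spawns, executes $O(m+n)$ times in total.

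Next I would bound the extra work each body execution does \emph{beyond} what the classical algorithm does, \emph{excluding} the cost inside $\IN$ and $\DL$ (which the lemma explicitly tells us to ignore). Whenever we pop $(v,v.\NXT)$ from $\SSS_1$ (lines 6--8) we run a \textsc{For} loop of $\log^* n$ calls to $\DL$; whenever we discover a new $\WH$ vertex (the \textsf{if} branch, lines 12--16) we run a \textsc{For} loop of $\log^* n$ iterations, each making two $\IN$ calls. Both \textsc{For} loops have length exactly $\log^* n$, and the non-$\IN$/$\DL$ overhead per iteration is $O(1)$. So each of the $O(m+n)$ body executions incurs $O(\log^* n)$ additional non-$\IN$/$\DL$ work from the pop, and each of the $O(n)$ discovery events incurs a further $O(\log^* n)$. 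Summing: the pops contribute $O((m+n)\log^* n)$ and the discoveries contribute $O(n\log^* n)$; the plain $\DFS$ part contributes $O(m+n)$. Adding these, the total time not attributed to $\IN$ and $\DL$ is $O((m+n)\log^* n)$.

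To sharpen this to the claimed $O(m+n\log^* n)$ rather than $O((m+n)\log^* n)$, I would refine the charging so that the $\log^* n$ factor is paid per \emph{vertex} rather than per \emph{body execution}. The key point is that the $\log^* n$-length \textsc{For} loops appear only in two places: (i) on a pop of $v$ from $\SSS_1$, and (ii) on discovery of a new $\WH$ neighbour. For (ii) there are at most $n$ discoveries total, giving $O(n\log^* n)$. For (i): although the \textsc{While} loop body runs $O(m+n)$ times, a \emph{pop} of $\SSS_1$ (lines 4--8, which re-read the top and run the $\DL$ loop) happens only when control returns to the outer \textsc{While}, which — in the classical analysis — corresponds either to finishing a vertex or to resuming a vertex after its child subtree completes; in either case it is charged to a vertex, of which there are $n$, \emph{plus} the re-pushes, of which there are again at most $n$ (one per tree edge). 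Thus the pops number $O(n)$, not $O(m+n)$, and contribute $O(n\log^* n)$. The edge-scan steps (event (b) above) stay inside the inner \textsc{While} loop and do \emph{not} trigger a \textsc{For} loop, so they cost only $O(1)$ each, for $O(m)$ total. Combining, the time excluding $\IN$/$\DL$ is $O(m) + O(n\log^* n) = O(m+n\log^* n)$.

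The main obstacle I expect is exactly this last refinement: making rigorous that a fresh execution of lines 4--8 (with its $\DL$-loop) occurs only $O(n)$ times, not once per edge inspection. This hinges on carefully reading the control flow of Algorithm~\ref{alg:maindfs} — in particular that non-$\WH$ neighbours are skipped by the inner \textsc{While} (line 19) \emph{without} leaving that loop and hence without re-entering the outer pop code — and on recalling from the classical non-recursive $\DFS$ (Algorithm~2) that the number of pushes (equivalently, of times a vertex re-enters the stack) is $O(n)$ because a vertex is re-pushed only upon discovering a new white neighbour. Once that structural fact is nailed down, everything else is the routine observation that a fixed-length $O(\log^* n)$ loop with $O(1)$ work per iteration costs $O(\log^* n)$.
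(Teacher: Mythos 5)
Your proposal is correct and follows essentially the same route as the paper, which argues (very tersely) that the classical non-recursive $\DFS$ performs only $O(n)$ pushes/pops and $O(m)$ edge inspections, and that Algorithm~\ref{alg:maindfs} differs only in replacing each push/pop by a length-$\log^* n$ loop of $\IN$/$\DL$ calls, giving $O(m)+O(n\log^* n)$. Your refinement step --- charging the $\log^* n$-length \textsc{For} loops to the $O(n)$ pop/discovery events rather than to the $O(m+n)$ inner-loop iterations --- is exactly the accounting the paper implicitly relies on, just spelled out in more detail.
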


In $\IN(v,i,v.\NXT)$  procedure, we add the information about vertex $v$ to $\SSS_i$. Remember that  $\SSS_i$ is used to restore $\SSS_{i-1}$. We will now describe $\SSS_i$ in detail.

\section{ Information in $\SSS_i$}
\label{sec:info}
In \cite{ElmasryHK15}, where we just have to restore $\SSS_1$, the following two pieces of information about each vertex is stored in $D$: (1) The segment number in which $v$ lies. (2) The approximate position in $\NN(v)$ where $v.\NXT-1$
lies.

We try to generalize this idea. Unlike $D$, the dictionary $\SSS_i$ in our algorithm contains information about vertices present in at most two top $\IS{i}$s. For each such $v \in \SSS_i$, let $\SSS_i(v)$ denote the cell in which information related to $v$ is stored. We will store the following information related to $v$.
\begin{enumerate}
  \item The $\IS{(i-1)}$ number in which $v$ lies.

Remember that $\SSS_i$'s main function is to
restore $\SSS_{i-1}$.  Thus, for each vertex $v$, we will
store the $\IS{(i-1)}$ to which $v$ belongs, let us denote it by $\SEG_{i-1}(v)$.
 $\SEG_{i-1}(v)$ will help the restore algorithm of $\SSS_{i-1}$ to
check whether $v$ indeed lies in the top $\IS{(i-1)}$.
Since
 the total number of $\IS{(i-1)}$ is
$(\LL{i-1})^{2}$, $2\LL{i}$ bits are required to represent $\SEG_{i-1}(v)$.

  \item  The approximate position in $\NN(v)$ where $v.\NXT-1$
lies.

The above information is used to find $v.\NXT-1$ efficiently. It would have been nice if we could explicitly store $v.\NXT-1$. However, this will require $O(\log n)$ bits for each vertex in $\SSS_i$ --- a space which we cannot afford. To overcome the space limitation, we divide $\NN(v)$ into groups of appropriate size and store the group number in which $v.\NXT-1$ lies.
\end{enumerate}

The exact definition of the second term requires some more work.  Note that $\SEG_{i-1}(v)$ takes just $O(\LL{i})$ bits. We want the second term also to take $O(\LL{i})$ bits. Thus, the number of groups into which we divide $\NN(v)$ should not be huge (it should be $\le \LL{i-1}$).  However, if the number of groups is small, it implies that the {\em group size}, i.e., the number of vertices in each group, may be large. Thus, given the group number,  finding $v.\NXT-1$ in the group will take more time.  Thus, we are faced with a dilemma where reducing the space increases the running time of our algorithm. To overcome this dilemma, we extend a  strategy used in \cite{ElmasryHK15}.  Elmasry et al. \cite{ElmasryHK15} divided the vertices into two sets -- {\em heavy} and {\it light}. A light vertex has {\it low} degree --- thus, its group size is small. For heavy vertices, they show that the total number of heavy vertices is small and for each heavy vertex $v$, $v.\NXT-1$ can be stored explicitly without using too much space. We plan to extend this strategy. But unlike \cite{ElmasryHK15}, we have a hierarchy of heavy and light vertices (since we have a hierarchy of $\SSS_i$'s).

\subsection{Light Vertices}

\begin{definition}
A vertex $v$ is $\IL{i}$ if $\DD(v) \le \frac{m(\LL{i-1})^2}{n}$
where $2 \le i \le \log^* n$. We define all the vertices in
$V$ to be $\IL{1}$.
\end{definition}

 We are now ready to define the second information related to $v$  stored in $\SSS_i$. If $v$ is $\IL{i}$, then we divide $\NN(v)$ into groups of size
$\frac{\DD(v)}{(\LL{i-1})^{3}}$.

\begin{definition}
If $v$ is $\IL{i}$, then the second information of $v$ (approximate position of $v.\NXT-1$ in $\NN(v)$) stored in $\SSS_i$ is $\GRO_{i-1}(v.\NXT-1)$ defined as follows:  $\GRO_{i-1}(v.\NXT-1) :=\EL$ if $\EL \frac{deg(v)}{(\LL{i-1})^{3}}
< v.\NXT-1 \le (\EL+1) \frac{deg(v)}{(\LL{i-1})^{3}}$.
\end{definition}

The total number of groups
of $\NN(v)$ is  $(\LL{i-1})^{3}$. Thus the total number of bits
required to represent $\GRO(v.\NXT-1)$ is 3$\LL{i}$ bits.

Remember that we partitioned the set of vertices into light and heavy only to make the group size small. We now bound the number of vertices in a  group  of a $\IL{i}$ vertex.\begin{observation}
\label{obs:lightgroupsize}
If $v$ is $\IL{i}$, then the total number of vertices in each group of $\NN(v)$ is
$\le \frac{\DD(v)}{(\LL{i-1})^{3}} \le \frac{m (\LL{i-1})^{2}}{n(\LL{i-1})^{3}} = \frac{m}{n\LL{i-1}}$.
\end{observation}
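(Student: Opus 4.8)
The plan is a direct substitution into the defining inequality of an $\IL{i}$ vertex, so the \emph{proof} is essentially one line; the point of the statement is to record a quantity ($m/(n\LL{i-1})$) that will later control the cost of scanning a group during a restoration. First I would recall the construction: for an $\IL{i}$ vertex $v$ we partition the array $\NN(v)$ into consecutive blocks of $\frac{\DD(v)}{(\LL{i-1})^{3}}$ entries, giving $(\LL{i-1})^{3}$ groups in total (which is exactly why $\GRO_{i-1}(v.\NXT-1)$ fits in $3\LL{i}$ bits). Hence the number of vertices in any one group is at most $\frac{\DD(v)}{(\LL{i-1})^{3}}$, which is the first inequality in the statement.

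Next I would invoke the hypothesis. Since $v$ is $\IL{i}$, by definition $\DD(v) \le \frac{m(\LL{i-1})^{2}}{n}$. Dividing both sides by the positive quantity $(\LL{i-1})^{3}$ gives
\[
\frac{\DD(v)}{(\LL{i-1})^{3}} \;\le\; \frac{m(\LL{i-1})^{2}}{n(\LL{i-1})^{3}} \;=\; \frac{m}{n\,\LL{i-1}},
\]
which is precisely the remaining link in the claimed chain of inequalities. Concatenating with the first bound finishes the argument.

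Two points deserve a remark rather than genuine work, and neither is an obstacle. First, the group size is really $\bigl\lceil \DD(v)/(\LL{i-1})^{3} \bigr\rceil$; as everywhere else in the paper we suppress ceilings, and since we may assume $m > n$ (otherwise we simply run the $O(m+n)$-time, $O(m+n)$-bit algorithm of Banerjee et al.\ \cite{BanerjeeC016}), the rounding affects only constants. Second, for $i = 1$ every vertex is $\IL{1}$ by fiat and the bound plays no role; the meaningful range is $2 \le i \le \log^* n$, where $\LL{i-1} \ge \alpha > 1$ keeps every denominator well defined. So the only "hard part" is remembering to flag the ceiling convention — the observation itself is an immediate consequence of the $\IL{i}$ definition.
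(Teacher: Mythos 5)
Your proposal is correct and matches the paper's reasoning exactly: the paper treats this as an immediate consequence of the group size $\frac{\DD(v)}{(\LL{i-1})^{3}}$ combined with the defining bound $\DD(v) \le \frac{m(\LL{i-1})^{2}}{n}$ for an $\IL{i}$ vertex, which is precisely your one-line substitution. Your added remarks about the suppressed ceilings and the vacuous $i=1$ case are consistent with the paper's stated conventions and require no further justification.
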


We are now ready to formally define the information about vertex $v$ stored in $\SSS_i$. \begin{itemize}
\item If an $\IL{i}$ vertex  $v$ becomes a part of top 
$\IS{i}$ of imaginary stack $\SSS$, then we store the following
information about $v$.  $\SSS_i(v) = (\SEG_{i-1}(v),\GRO_{i-1}(v.\NXT-1))$

\item If vertex $v$ is not $\IL{i}$, then $\SSS_i(v)
= (\SEG_{i-1}(v),0)$, that is we just store the $\IS{ (i-1)}$
in which $v$ resides.

\end{itemize}
Some explanation is in order. If $v$ is an $\IL{i}$ vertex, then we can store the information $(\SEG_{i-1}(v),\GRO_{i-1}(v.\NXT-1))$
corresponding to $v$. We have already shown that both these terms take $O(\LL{i})$ bits. Moreover, given the group number $\GRO_{i-1}(v.\NXT-1)$, we can find $v.\NXT-1$ in $O\Big(\frac{m}{n\LL{i-1}}\Big)$ time,
as the number of vertices in each group of an $\IL{i}$ vertex is $\le \frac{m}{n\LL{i-1}}$ (using Observation \ref{obs:lightgroupsize}).

However, if $v$ is not $\IL{i}$, then its group size may be
$> \frac{m}{n\LL{i-1}}$  which is not desirable (as this might increase the search time for $v.\NXT-1$).  So, for such a
vertex,  we store  $\SEG_{i-1}(v)$ only as there is no point
in storing the second term (the second term 0 is just a
dummy term). But for efficiency, we need to store some information regarding $v.\NXT-1$ even for the vertex which is not $\IL{i}$.
In the next section, we describe a data-structure which will efficiently store information about all  non $\IL{i}$ vertices.

\subsection{Heavy Vertices}

\begin{definition}
A vertex $v$ is $\IH{i}$ if  $\frac{m (\LL{i-1})^2}{n}<
\DD(v) \le \frac{m (\LL{i-2})^2}{n}$ where $3 \le i \le
 \log^* n$.
We define a $\IH{2}$ vertex separately. A vertex $v$ is
said to be $\IH{2}$ if $\frac{m(\log n)^{2}}{n} < \DD(v)
\le n$.
\end{definition}

Note that our definition partitions the vertex set nicely.
We prove this nice property in the following lemma:

\begin{lemma}
\label{lem:caseheavy}
If $v$ is not $\IL{i}$ $(i\ge 2)$, then it is $\IH{j}$
for some $j$ where
$2 \le j \le i$.
\end{lemma}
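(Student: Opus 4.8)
The plan is to locate $j$ by walking down the strictly decreasing sequence of degree thresholds $\frac{m(\LL{1})^2}{n}, \frac{m(\LL{2})^2}{n}, \dots, \frac{m(\LL{i-1})^2}{n}$ and picking out the first one that $\DD(v)$ exceeds. Since $v$ is not $\IL{i}$ (and $2 \le i \le \log^* n$, so all the iterated logarithms involved are well defined), we have $\DD(v) > \frac{m(\LL{i-1})^2}{n}$. Hence the set $\{\, k \in \{2,\dots,i\} : \DD(v) > \frac{m(\LL{k-1})^2}{n} \,\}$ is non-empty, as it contains $k = i$; let $j$ be its minimum element, so that $2 \le j \le i$. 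It remains to check that this $j$ witnesses $v$ being $\IH{j}$.

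First I would dispose of the base case $j = 2$. Here $\DD(v) > \frac{m(\LL{1})^2}{n} = \frac{m(\log n)^2}{n}$, and since $\DD(v) = |\NN(v)| \le n$ for every vertex, the two inequalities $\frac{m(\log n)^2}{n} < \DD(v) \le n$ are exactly the definition of $v$ being $\IH{2}$. Together with $2 \le j = 2 \le i$ this settles the case.

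Next, the case $j \ge 3$. By minimality of $j$, the index $j-1$ (which still lies in $\{2,\dots,i\}$, since $j-1 \ge 2$) fails the defining condition, i.e.\ $\DD(v) \le \frac{m(\LL{j-2})^2}{n}$. Combining this with $\DD(v) > \frac{m(\LL{j-1})^2}{n}$ (which holds because $j$ belongs to the set) gives $\frac{m(\LL{j-1})^2}{n} < \DD(v) \le \frac{m(\LL{j-2})^2}{n}$, which is precisely the definition of $\IH{j}$ for $3 \le j \le \log^* n$; and again $2 \le j \le i$.

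I do not expect a genuine obstacle here: the argument is a one-line pigeonhole over the threshold sequence. The only points to track carefully are (i) that the universal bound $\DD(v) \le n$ is what closes off the top end in the $\IH{2}$ case (there is no $\LL{0}$-threshold there), and (ii) that all indices stay in range — $j-1 \ge 2$ whenever $j \ge 3$, and $j \le i \le \log^* n$ — so that every $\LL{\cdot}$ and every $\IH{\cdot}$ we invoke is actually defined.
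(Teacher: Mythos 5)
Your proof is correct and follows essentially the same route as the paper's: both arguments observe that the failure of $\IL{i}$ places $\DD(v)$ in $\bigl(\frac{m(\LL{i-1})^2}{n},\, n\bigr]$, which is covered by the consecutive $\IH{j}$ intervals for $2 \le j \le i$. Your version is in fact slightly more careful than the paper's one-line existence claim, since you explicitly exhibit $j$ as the minimum index whose threshold $\DD(v)$ exceeds and verify both endpoints of the resulting interval.
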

\begin{proof}
Since $v$ is not $\IL{i}$, $\frac{m(\LL{i-1})^{2}}{n}
< \DD(v)
\le n $.
Thus, there exists a $j$ ($3 \le j \le i$) such that $\frac{m(\LL{j-1})^{2}}{n}
\le \DD(v) < \frac{m (\LL{j-2})^{2}}{n}$ or $\frac{m(\log
n)^{2}}{n}
\le \DD(v) < n$ (the case when $j=2$).
\end{proof}

We store the information related to an $\IH{i}$ vertex in a dynamic dictionary $\HH_i$ where $i \ge 2$.  Since degree of a $\IH{i}$ vertex
$v$ is $\ge   \frac{m (\LL{i-1})^{2}}{n}$, total number of $\IH{i}$
vertices is $O\Big(\frac{n}{(\LL{i-1})^{2}}\Big)$. Similar to $\IL{i}$
vertices, we divide $\NN(v)$ into groups of size $\frac{\DD(v)}{(\LL{i-2})^{3}}$.
The only problem with this group size is that it is not
defined for $i=2$. If $i=2$, then we divide $\NN(v)$ into groups of size 1.

 We store the group number of $v$ in the   dynamic dictionary
  $\HH_i$, that is $\GRO_{i-2}(v.\NXT-1)$ defined as follows:  $\GRO_{i-2}(v.\NXT-1)
:=\EL$ if $\EL \frac{deg(v)}{(\LL{i-2})^{3}}
< v.\NXT-1 \le (\EL+1) \frac{deg(v)}{(\LL{i-2})^{3}}$.
 Since we divide
  $\DD(v)$ into groups of size $\frac{\DD(v)}{(\LL{i-2})^{3}}$,
  the total number of groups is $(\LL{i-2})^{3}$.  This implies
  that total space required to represent the group number
  per vertex in $\HH_i$ is $3\LL{i-1}$ bits.

Using Observation \ref{obs:lightgroupsize}, if a vertex $v$ is $\IL{i}$, then  the associated
group size\ (stored in $\SSS_i$) is $ \frac{m}{n\LL{i-1}}$. The next lemma
present a very crucial feature of our algorithm:

\begin{lemma}
\label{lem:groupsize}
Let $v$ be a vertex in $\SSS_i$, then the group size associated
with $v$ is of size $\le 1+ \frac{m}{n\LL{i-1}}$.
\end{lemma}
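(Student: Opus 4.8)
The statement to prove is Lemma~\ref{lem:groupsize}: for any vertex $v$ stored in $\SSS_i$, the group size associated with $v$ (i.e., the size of each group into which $\NN(v)$ is partitioned, as recorded by the second entry of $\SSS_i(v)$ or by $\HH_j$ if $v$ is heavy) is at most $1 + \frac{m}{n\LL{i-1}}$.

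The plan is a straightforward case analysis driven by Lemma~\ref{lem:caseheavy}, which tells us that every vertex $v$ in $\SSS_i$ is either $\IL{i}$ or else $\IH{j}$ for some $2 \le j \le i$. First I would dispose of the light case: if $v$ is $\IL{i}$, then by Observation~\ref{obs:lightgroupsize} the group size is $\le \frac{\DD(v)}{(\LL{i-1})^3} \le \frac{m}{n\LL{i-1}}$, which is certainly $\le 1 + \frac{m}{n\LL{i-1}}$. The remaining case is when $v$ is not $\IL{i}$; invoke Lemma~\ref{lem:caseheavy} to get that $v$ is $\IH{j}$ for some $j$ with $2 \le j \le i$, so the information about $v$ is stored in $\HH_j$ with the group partition of $\NN(v)$ into blocks of size $\frac{\DD(v)}{(\LL{j-2})^3}$ (or size $1$ when $j=2$).

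For the heavy subcase with $j > 2$: since $v$ is $\IH{j}$ we have $\DD(v) \le \frac{m(\LL{j-2})^2}{n}$, hence the group size is $\le \frac{1}{(\LL{j-2})^3}\cdot\frac{m(\LL{j-2})^2}{n} = \frac{m}{n\LL{j-2}}$; since $j \le i$ we have $\LL{j-2} \ge \LL{i-2} \ge \LL{i-1}$, so $\frac{m}{n\LL{j-2}} \le \frac{m}{n\LL{i-1}} \le 1 + \frac{m}{n\LL{i-1}}$. For the heavy subcase $j = 2$ the group size is exactly $1$, and since $i \ge j = 2$ we have $i - 1 \ge 1$ so $1 \le 1 + \frac{m}{n\LL{i-1}}$ trivially. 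In all cases the bound $1 + \frac{m}{n\LL{i-1}}$ holds.

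I do not expect any real obstacle here — the $+1$ slack in the statement is precisely there to absorb the $j=2$ group-size-$1$ convention (where the clean formula $\frac{m}{n\LL{j-2}}$ would need $m \ge n$ to be meaningful), and once one notes monotonicity of $\LL{\cdot}$ in its subscript the inequalities chain together mechanically. The only point to be careful about is the indexing when $i$ is small (e.g.\ $i = 2$, so that $\LL{i-1} = \LL{1} = \log\log n$ and $\LL{i-2}=\LL{0}$, which is only defined per-vertex as $\DD(v)$ in the $j=2$ convention) — but in that regime $v$ stored in $\SSS_2$ is either $\IL{2}$ (handled by the observation) or $\IH{2}$ (handled by the $j=2$ case), so no ill-defined quantity is actually used.
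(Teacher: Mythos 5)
Your proof is correct and follows essentially the same route as the paper: split on whether $v$ is $\IL{i}$ (using Observation~\ref{obs:lightgroupsize}) or, via Lemma~\ref{lem:caseheavy}, $\IH{j}$ for some $2 \le j \le i$, then handle $j>2$ by the degree upper bound and $j=2$ by the group-size-one convention, with the $+1$ absorbing the latter. The only difference is that you spell out the monotonicity chain $\LL{j-2} \ge \LL{i-1}$ and the small-$i$ indexing explicitly, which the paper leaves implicit.
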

\begin{proof}
If $v$ is $\IL{i}$, then we have already seen that the group
size associated with $v$ (and stored in $\SSS_i$)  is $ \frac{m}{n\LL{i-1}}$.
Using Lemma \ref{lem:caseheavy}, if $v$ is not $\IL{i}$,
then it is $\IH{j}$ for $2\le j \le i$. Thus, the information
about the group of $v$ is stored in $\HH_j$, that is $\GRO_{j-2}(v.\NXT-1)$.
To this end, we divide $\NN(v)$ into group of size $\frac{\DD(v)}{(\LL{j-2})^{3}}$.
There are two cases:

\begin{enumerate}
\item $j > 2$

Since $v$ is $\IH{j}$, $\DD(v) \le \frac{m( \LL{j-2})^{2}}{n}$.
This implies that the size of each group is $\le\frac{m}{n\LL{j-2}} \le  \frac{m}{n\LL{i-1}}$.

\item $j=2$

 By definition, the group size is exactly $1$. \end{enumerate}

Thus, the group size associated with $v$ is  $\le 1+ \frac{m}{n\LL{i-1}}$.

\end{proof}

The above lemma shows a crucial property of all  vertices
in $\SSS_i$. The associated group size of all these vertices is $\le 1+ \frac{m}{n\LL{i-1}}$ irrespective
of their degree. Thus, whenever we are searching for $v.\NXT-1$ for a vertex $v$, we have to search atmost $1+ \frac{m}{n\LL{i-1}}$. We will crucially exploit this property in the restoration algorithm. However, before that let us take a look at the insert and delete procedures.

\section{Insert and Delete Procedures}

\begin{algorithm}[h]
  \caption{ $\IN(v,i,v.\NXT)$ }
  \label{}
   \If{  $|\SSS_i| = \frac{2n}{(\LL{i-1})^{2}}$    }
      {
           $\REF(i)$\;
      }
  \If{$v$ is $\IL{i}$}
  {

      $\SSS_i.\IN(v,(\SEG_{i-1}(v),\GRO_{i-1}(v.\NXT-1)))$
or  $\SSS_1.\PU(v,v.\NXT)$ (if $i=1$)\;
  }
  \Else{
    $\SSS_i.\IN(v,(\SEG_{i-1}(v),0))$
  }
        \If{ $\TT_i$ is empty  or recently pushed element
becomes the top element of an $\IS{i}$  }
      {
          $\TT_i.\PU(v,v.\NXT)$\;
      }
  \If{$v$ is $\IH{i}$}
  {
      $\HH_i.\IN(v,\GRO_{i-2}(v.\NXT-1))$\;
  }
\end{algorithm}
   In the $\IN$ procedure,
$v$ is to be inserted
in $\SSS_i$. But $\SSS_i$ may be full, that is, it has $\frac{2n}{(\LL{i})^2}$
vertices. So, we call $\REF(i)$ which basically aims at
removing
half of the elements of $\SSS_i$.  After
the restoration, $\SSS_i$ has the top $\frac{n}{(\LL{i})^{2}}$
 vertices of the imaginary stack $\SSS$. We then insert
$(\SEG_{i-1}(v),\GRO_{i-1}(v.\NXT-1))$
in $\SSS_i$. If this newly added element becomes the top
element of an $\IS{i}$ or the trailer itself is empty then
we add $(v,v.\NXT)$  to the trailer
$\TT_i$. Lastly, if $v$ is $\IH{i}$, then it is added to
$\HH_i$. Three details are missing from the  pseudo
code of $\IN$. We list them now:

\begin{enumerate}
\item {\em Calculating $\SEG_{i-1}(v)$}

Let $k_1$ be the total number of vertices in trailer $\TT_i$
and $k_2$ be the total number of vertices in $\SSS_i$.
We first calculate the total number of vertices below $v$
in the imaginary stack $\SSS$. This is $k= (k_1-1)\times
\text{size of $\IS{i}$ }+k_2$ = $(k_1-1) \times \frac{n}{(\LL{i})^{2}}
+k_2$. Once we have calculated $k$, finding $\SEG_{i-1}(v)$
is just a mathematical calculation.

\item {\em Calculating  $\GRO_{i-1}(v.\NXT-1)$ or  $\GRO_{i-2}(v.\NXT-1)$
}

This is just a mathematical calculation once we know $v.\NXT$
and $\DD(v)$.

\item {\em Finding if $v$ is a top element of an $\IS{i}$}

This can be done by maintaining the number of elements currently present
in the imaginary stack $\SSS$. Before inserting $v$, if
$|\SSS| = 0$ or $|\SSS| = \frac{cn}{(\LL{i})^{2}}-1$
$(c \ge 1)$, then we insert $(v,v.\NXT)$ on to the trailer
$\TT_i$.

\end{enumerate}

\begin{algorithm}[h]
  \caption{ $\DL(v,i)$ }
  \label{}
      \If{$|\SSS_i| < \frac{n}{2(\LL{i-1})^{2}}$  and $\TT_i$
has
at least two elements }
      {
          $\REE(i)$\;
      }
      \If{$v$ is $\IH{i}$}
      {
          $\HH_i.\DL(v)$\;
      }
  \If{$v$ is on the top of the trailer $\TT_i$}
  {
     $\TT_i.\PO()$\;
  }
      return $\SSS_{i}.\DL(v)$ or $\SSS_1.\PO()$ (if $i=1)$

\end{algorithm}

The $\DL(v,i)$ is nearly similar to the $\IN$ procedure.
We first check if the number of elements in $\SSS_i$ is
 less. If yes, then  we also have to check if
the trailer itself has enough elements. If yes,  then
we call $\REE(i)$. After its execution,  $\SSS_i$ contains
topmost  $\frac{n}{(\LL{i})^{2}}$ vertices of the imaginary
stack $\SSS$. If $v$ is $\IH{i}$, then it is removed from $\HH_i$.
 After this, the top element
of $\SSS_i$ (and $\TT_i$ if necessary) is removed.

The following lemma about the running time of $\IN$ and
$\DL$ is immediate (due to our data-structure in Lemma
\ref{lem:dict}).
\begin{lemma}
        \label{lem:ins_del_O1}
Apart from the time taken by $\REE$ and $\REF$, the running
time taken by $\IN$ and $\DL$ procedure is $O(1)$ with high
probability\footnote{Since we use the data-structure described in Lemma \ref{lem:dict}
at most poly($n$) times, all insert and deletes are successful
with probability $\ge 1-\frac{1}{n^{c}}$ where $c$ is some constant. }.
\end{lemma}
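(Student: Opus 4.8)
The plan is to walk line by line through the $\IN$ and $\DL$ pseudocode, charging each primitive a constant cost, invoking Lemma \ref{lem:dict} for the succinct‑dictionary operations and noting that every other operation manipulates only a constant number of $\Theta(\log n)$‑bit words, which the register input model evaluates in $O(1)$ time.

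First I would dispose of the three ``missing details'' flagged after the $\IN$ pseudocode, since these are the only lines whose $O(1)$ cost is not syntactically obvious. Computing $k=(k_1-1)\cdot\frac{n}{(\LL{i})^{2}}+k_2$ and then deriving $\SEG_{i-1}(v)$ from $k$ is a fixed sequence of arithmetic operations, hence $O(1)$; this presupposes that $k_1=|\TT_i|$ and $k_2=|\SSS_i|$ are available in $O(1)$ time, which holds provided each stack and each dictionary maintains a size counter (I would state this explicitly). Computing $\GRO_{i-1}(v.\NXT-1)$ or $\GRO_{i-2}(v.\NXT-1)$ is likewise a constant number of arithmetic operations once $v.\NXT$ and $\DD(v)$ are known, and $\DD(v)=|\NN(v)|$ is available in $O(1)$ time since $\NN(v)$ is an array. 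Deciding whether the newly pushed element becomes the top of an $\IS{i}$ reduces to comparing a maintained global counter $|\SSS|$ against $0$ or $\frac{cn}{(\LL{i})^{2}}-1$, again $O(1)$. I would also remark that the tests ``$v$ is $\IL{i}$'' and ``$v$ is $\IH{i}$'' are a constant number of comparisons of $\DD(v)$ against thresholds of the form $\frac{m(\LL{j})^{2}}{n}$.

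Second, I would account for the genuine data‑structure calls. In $\IN$, excluding the $\REF$ invocation, we perform a size test on $\SSS_i$, one insert into $\SSS_i$ (or one $\PU$ onto $\SSS_1$ when $i=1$), a conditional $\PU$ onto $\TT_i$, and a conditional insert into $\HH_i$ — a constant number of operations. For $i\ge 2$ each $\SSS_i$ and each $\HH_i$ is an instance of the succinct dynamic dictionary of Lemma \ref{lem:dict}, whose operations cost $O(1)$ with probability $1-1/n^{c}$; $\SSS_1$ is an ordinary array‑backed stack and $\TT_i$ is a stack of $O((\LL{i})^{2})$ explicitly stored tuples, both supporting $\PU$, $\PO$ and $\TO$ in worst‑case $O(1)$ regardless of their size. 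The $\DL$ procedure is symmetric: a size test on $\SSS_i$ and on $\TT_i$, a conditional delete from $\HH_i$, a conditional $\PO$ from $\TT_i$, and a delete from $\SSS_i$ (or $\PO$ from $\SSS_1$). Hence a single call to $\IN$ or $\DL$, not counting the recursive cost of the $\REF$/$\REE$ call it may trigger, performs $O(1)$ dictionary operations plus $O(1)$ register‑model arithmetic, and therefore runs in $O(1)$ time with probability $1-1/n^{c}$.

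The only real subtlety — and the reason the statement reads ``with high probability'' rather than worst case — is the failure probability of Lemma \ref{lem:dict}, so I would close with the union‑bound argument sketched in the footnote: by Lemma \ref{lem:maindfs} the algorithm issues $O(m+n\log^{*}n)$ calls to $\IN$ and $\DL$ in total across all levels, each triggering $O(1)$ dictionary operations, so for a suitably large choice of the constant $c$ in Lemma \ref{lem:dict} a union bound over these $\mathrm{poly}(n)$ operations shows that all of them meet their $O(1)$ time bound with probability $1-1/n^{c'}$. I do not anticipate a genuine obstacle here; the content is bookkeeping, and the two points that merely require care are (i) making explicit that $O(1)$‑time size counters for $\SSS_i$, $\TT_i$ and the global $|\SSS|$ are maintained, and (ii) interpreting ``apart from the time taken by $\REE$ and $\REF$'' as excluding the entire recursive cost of those calls, which is exactly what the subsequent analysis of $\REE$/$\REF$ will charge separately.
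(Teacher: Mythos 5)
Your proposal is correct and takes essentially the same route as the paper, which simply declares the lemma ``immediate'' from Lemma \ref{lem:dict} together with the footnote's union bound; you have just made explicit the bookkeeping (constant arithmetic for $\SEG$ and $\GRO$, maintained size counters, $O(1)$ stack and dictionary operations) that the paper leaves implicit. No gap.
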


\section{Restore Procedure}
\begin{algorithm}[h]
  \caption{ $\REE(\log^* n)$ }
  \label{}

  color all the $\GR$ vertices $\WH$\;

  $u \leftarrow$ the vertex from which we started our $\DFS$
in $\SSS$.

$v \leftarrow u$\;

  \Do{ $v$ is not equal to the top of trailer $\TT_{\log^*
n}$}
  {
      
      Let $k$ be the index of the first white neighbor of
$v$\;

$\CC(v) \leftarrow \GR$\;
\If{ $v$ lies in the top $\IS{(\log^* n )}$ }
      {
          \If{$v$ is $\IL{\log^* n}$ }
          {
              $\SSS_{\log^* n}.\IN(v,(\SEG_{\log^* n-1}(v),\GRO_{\log^*
n-1}(k))$\;
          }
          \Else
          {
               $\SSS_{\log^* n}.\IN(v,(\SEG_{\log^* n-1}(v),0))$\;
          }

      }

      $v \leftarrow \NN(v)[k]$\;
  }

\end{algorithm}
We now move on to the most important part of our algorithm,
that is the restoration of $\SSS_i$'s. First, we describe our
approach for restoring the last dictionary, that
is, $\REE(\log^* n)$. Remember that to restore the last dictionary, we do the most trivial thing, that is run the $\DFS$ algorithm again.
So, we run the $\DFS$ algorithm again from the starting vertex $u$
ignoring all the black vertices (this process is similar to the one described in \cite{ElmasryHK15}). We mark all the $\GR$ vertices
$\WH$ and perform a $\DFS$ from $u$ till we hit the topmost trailer of 
$T_{\log^* n}$.  Whenever we encounter a vertex
of the top $\IS{\log^* n}$, we add it to $\SSS_{\log^* n}$ after calculating relevant parameters (as similar to that in $\IN$ algorithm).
Note that we can easily find if $v$ is a part of top $\IS{\log^* n}$
by comparing the number of vertices processed by the  restore
algorithm to the number of elements in  the imaginary stack
$\SSS$\ (which we can easily maintain).
We now show that our $\REE(\log^* n)$ procedure is correct. To
this end, we will compare our algorithm with the $\DFS$ algorithm that works with the imaginary stack $\SSS$. We will call this $\DFS$ algorithm as an {\em imaginary $\DFS$ algorithm}. We first observe the following:
\begin{observation}
\label{obs:fact}
Let $(v, v.\NXT)$ be an entry on the imaginary stack $\SSS$
when we call $\REE(\log^* n)$. Then, all  vertices in $\NN(v)[1
\dots v.\NXT-2]$ are black or gray when the imaginary $\DFS$
algorithm pushes this entry on to $\SSS$.
\end{observation}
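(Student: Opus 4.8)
The plan is to argue directly from the way the imaginary $\DFS$ algorithm manipulates $\SSS$, i.e., from the pseudocode of \textsc{Process} (Algorithm 2), which is the template that Algorithm \ref{alg:maindfs} faithfully imitates. First I would recall the precise moment an entry $(v,k)$ with $k = v.\NXT$ is pushed: in \textsc{Process}, the top element $(v,j)$ is popped, $v$ is colored $\GR$, and if $j \le \DD(v)$ then $(v,j+1)$ is pushed back immediately, followed (conditionally) by $(\NN(v)[j],1)$. So the entry $(v,v.\NXT)$ that sits on $\SSS$ is created with $v.\NXT = j+1$ for the particular index $j$ at which $v$ was being scanned, and in fact this happens repeatedly: $v$ cycles through the \textsc{While} loop, each iteration incrementing its counter, and a new entry $(v, \cdot)$ lands on $\SSS$ only when a fresh neighbor index is about to be examined.

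The key step is an invariant on the scan counter of a vertex currently in the imaginary stack: \emph{whenever the imaginary $\DFS$ has an entry $(v,k)$ on $\SSS$, every vertex in $\NN(v)[1..k-2]$ has already been examined-and-rejected by $v$ (meaning it was found non-white at the time $v$ inspected it), and hence is $\GR$ or $\BL$ at that moment and stays so forever.} I would prove this by induction on the sequence of stack operations. When $(v,k)$ is first pushed with $k = j+1$, the loop body has just tested $\CC(\NN(v)[j])$; if that neighbor was white we push $(\NN(v)[j],1)$ and break, so the indices $1,\dots,j-1 = 1,\dots,k-2$ are exactly those the earlier iterations rejected — each was non-white when inspected, and vertices never revert from $\GR$/$\BL$ to $\WH$ in the imaginary algorithm, so the claim holds. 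When $v$ later resumes (its entry is again on top) and advances its counter from $k$ to $k+1$ through the $\Else$ branch, this happens precisely because $\NN(v)[k]$ was found non-white, extending the rejected prefix from $k-2$ to $k-1$; the invariant is maintained. Specializing the invariant to the configuration of $\SSS$ at the instant $\REE(\log^* n)$ is called gives exactly the statement: for the entry $(v,v.\NXT)$ then on $\SSS$, all of $\NN(v)[1..v.\NXT-2]$ are $\GR$ or $\BL$ — and they already were so back when that entry was pushed.

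The only subtlety I anticipate — and the one place to be careful — is the bookkeeping around the "$v.\NXT$ vs.\ $v.\NXT-1$" and "$-2$" offsets: the stored counter points at the \emph{next} neighbor to explore, so $v.\NXT-1$ is the neighbor whose discovery caused $v$ to be suspended (this is white when pushed, which is why the claim is about the prefix up to $v.\NXT-2$, not $v.\NXT-1$), and one must check the edge case $v.\NXT = 1$ (empty prefix, vacuous) and $v.\NXT = 2$ (also empty prefix). Since Algorithm \ref{alg:maindfs} performs the same pushes/pops on the imaginary $\SSS$ that \textsc{Process} does — it merely additionally maintains $\SSS_1,\dots,\SSS_{\log^* n}$ — the invariant established for the imaginary $\DFS$ transfers verbatim, and no properties of the auxiliary dictionaries are needed for this observation.
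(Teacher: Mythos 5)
Your proof is correct and follows essentially the same route as the paper's: both rest on the fact that the entry $(v,v.\NXT)$ is created only after $v$ has examined and rejected (found non-white) every neighbor with index at most $v.\NXT-2$, and that vertices never revert to $\WH$. The paper states this as a one-step direct argument at the moment of the push, whereas you package it as an invariant maintained by induction over stack operations; the extra bookkeeping about counter increments and edge cases is sound but not a different idea.
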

\begin{proof}
Consider the step when the imaginary $\DFS$ algorithm pushes
the entry $(v,v.\NXT)$ on to the stack. This means that
it has found a white vertex $\NN{(v)[v.\NXT-1}]$. Thus,  $v$ has already processed all vertices in $\NN(v)[1 \dots v.\NXT-2]$
and color of each processed vertex is either gray or black.
\end{proof}

We now use the above observation to prove that $\REE(\log^* n)$
is correct.

\begin{lemma}
 Let $(v,v.\NXT)$ be an entry on the imaginary stack $\SSS$
when we  call $\REE(\log^* n)$. Then, (1) $\REE(\log^* n)$ also processes
the tuple $(v,v.\NXT)$ and (2) color of all the non-black
vertices is exactly same in the imaginary $\DFS$ algorithm
 and our $\REE$
algorithm (after  both algorithms process $v$).

\end{lemma}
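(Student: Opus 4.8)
The plan is to prove both claims simultaneously by induction on the processing order of the imaginary DFS algorithm, showing that $\REE(\log^* n)$ faithfully re-traces the imaginary DFS up to the point where the current top trailer of $\TT_{\log^* n}$ is reached. The natural induction is over the entries of the imaginary stack $\SSS$, from bottom to top: the base case is the starting vertex $u$, which both algorithms process first (and color gray), and the induction hypothesis is that the two claims hold for the entry directly below $(v, v.\NXT)$ on $\SSS$. Concretely, I would first set up notation: if $(v, v.\NXT)$ is an entry on $\SSS$, the entry directly below it is some $(w, w.\NXT)$, and by the imaginary DFS mechanics $\NN(w)[w.\NXT - 1] = v$, i.e., $v$ is exactly the white vertex whose discovery caused $w$ to be re-pushed. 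I would phrase the statement being proved as: whenever the imaginary algorithm's stack contains the prefix of entries $(v_0, \cdot), (v_1, \cdot), \dots, (v_t, \cdot)$ (with $v_0 = u$) at the moment $\REE(\log^* n)$ is called, then $\REE$ processes $v_0, v_1, \dots, v_t$ in this order, and after processing each $v_j$ the colors of all non-black vertices agree between the two algorithms.

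The **core of the argument is the inductive step**, and here I would lean on Observation~\ref{obs:fact}. Suppose $\REE$ has just processed $w = v_{j-1}$, has colored it gray, and — by the induction hypothesis on claim~(2) — the colors of all non-black vertices agree with the imaginary DFS state at the moment the imaginary algorithm pushed $(w, w.\NXT)$. The next thing $\REE$ does is set $v \leftarrow \NN(w)[k]$ where $k$ is the index of the first white neighbor of $w$. I need to show this $k$ equals $w.\NXT - 1$, so that $v = \NN(w)[w.\NXT-1] = v_j$. For this: by Observation~\ref{obs:fact}, all vertices in $\NN(w)[1 \dots w.\NXT-2]$ were gray or black when the imaginary algorithm pushed $(w, w.\NXT)$; since black is permanent and (by the color-agreement hypothesis) $\REE$ sees the same non-black colors, $\REE$ sees every vertex in $\NN(w)[1 \dots w.\NXT - 2]$ as non-white, and it sees $\NN(w)[w.\NXT-1]$ as white (it was white when the imaginary algorithm discovered it, and it has not yet been touched by $\REE$ in this prefix since it is $v_j$, strictly above $w$ on the stack). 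Hence the first white neighbor of $w$ that $\REE$ finds is precisely at index $w.\NXT - 1$, giving $v = v_j$; this establishes claim~(1) for $v_j$. Then $\REE$ colors $v_j$ gray, which is exactly what the imaginary algorithm does when it subsequently pops and processes $v_j$; and the only color change on either side between processing $w$ and processing $v_j$ is "$v_j$ becomes gray", so claim~(2) is preserved. One must also check the boundary: processing of $\REE$ stops exactly when $v$ equals the top of $\TT_{\log^* n}$, which is the top vertex of the $(\log^* n)$-segment just below the two retained in $\SSS_{\log^* n}$, i.e., the correct stopping point so that the restored prefix is exactly the portion of $\SSS$ that was lost — this should be pulled from the definition of the trailer stack $\TT_{\log^* n}$ and the fact that $\REE$ is invoked precisely when $\SSS_{\log^* n}$ has dropped to its minimum fill.

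**The main obstacle I anticipate is the subtle color-tracking in claim~(2)** — specifically, ruling out the possibility that some vertex is gray in one algorithm but white (or black) in the other at the comparison point. The delicate case is a vertex $x$ that lies on the imaginary stack \emph{above} the current prefix: such an $x$ is gray in the current imaginary state but $\REE$ has not yet reached it, so $\REE$ still sees it as white — this is \emph{not} a contradiction because the lemma only asserts agreement "after both algorithms process $v$", comparing $\REE$'s state at its processing of $v_j$ against the imaginary state at \emph{its} processing of $v_j$, not against the final imaginary state at call time. I would make this precise by being scrupulous about \emph{which} imaginary-DFS snapshot each inductive statement refers to (the snapshot at the push of $(v_j, \cdot)$, equivalently the pop/process of $v_j$), and by noting that vertices strictly above $v_j$ on the stack are irrelevant to that snapshot. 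A secondary subtlety is that the imaginary DFS, after pushing $(w, w.\NXT)$, immediately pushes $(v_j, 1)$ and processes $v_j$ — so "the imaginary state when it pushes $(v_j, \cdot)$" and "when it processes $v_j$" differ only by $v_j$ turning gray, which is consistent with the $\REE$ side. I expect the write-up to reduce, after this careful bookkeeping, to a short and clean induction; the conceptual work is entirely in pinning down the correspondence of states, and the arithmetic about segment sizes and trailer positions is routine given the preliminaries.
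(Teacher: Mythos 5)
Your proof follows essentially the same route as the paper: a bottom-to-top induction over the entries of the imaginary stack that combines Observation~\ref{obs:fact} with the permanence of black vertices and the agreement of non-black colors at corresponding snapshots, and your bookkeeping of \emph{which} imaginary-DFS snapshot is compared at each step is in fact more explicit than the paper's own write-up. One small correction on a peripheral point: $\REE(\log^* n)$ re-traces the imaginary stack from the starting vertex all the way to the top of $\TT_{\log^* n}$, which at call time is the top of $\SSS$ itself (only the \emph{insertion} into $\SSS_{\log^* n}$ is restricted to vertices of the top $\IS{(\log^* n)}$), so the stopping point is not ``the top of the segment just below the two retained'' --- but this does not affect your induction.
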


\begin{proof}
First, note that we start our restoration process without
touching the color of a black vertex. Thus, if a vertex
is black
in the imaginary $\DFS$ algorithm (at the time we call $\REE(\log^* n)$),
it is also black in our
algorithm.

We now prove the statement of the lemma using induction.
Consider the moment when the imaginary $\DFS$ algorithm
put the entry $(u,u.\NXT)$ on to imaginary stack  $\SSS$
where $u$ is the vertex with which we started our $\DFS$.
 We now claim that there is no $\GR$ vertex in the graph
at this point in the imaginary $\DFS$ algorithm. This is
because all the gray vertices are always on
the imaginary stack $\SSS$ and when $u$ is processed, there
are no vertices on the imaginary stack. Thus, all  non-black vertices have $\WH$ color before the first push. Now, we claim that  (1) is true.
This is because the color of all the vertices
in $\NN(u)[1 \dots u.\NXT-2]$ is black, thus same for both  algorithms.
Due to Observation \ref{obs:fact}, we correctly find $(u,u.\NXT)$.
Before pushing$(u,u.\NXT)$ on to the stack, both
the algorithms make $u$ $\GR$. After the processing
of $u$, both the algorithms have same colors for all the non-black
vertices, thus (2) is also true.

We now show that the statement is true in general when we are inserting an element $(v, v.\NXT)$ at the $k^{\text{}th}$ iteration. Using the induction hypothesis, all the non-black vertices have same color at the end of the $(k-1)$-th iteration. Also, if a vertex is black in the imaginary $\DFS$ algorithm, it is also black at the start of our restore algorithm (since we donot touch black vertices). Since the imaginary $\DFS$ algorithm puts $(v, v.\NXT)$ on to the stack, vertices in  $\NN(v)[1 \dots v.\NXT-2]$ are black or gray. Using  the above arguments, the color of these vertices is same even in our algorithm. Thus, we also push $(v, v.\NXT)$ in our algorithm. Thus, (1) is true. Before pushing $(v,v.\NXT)$, both our algorithm and the imaginary $\DFS$ algorithm
mark $v$ $\GR$ -- the only change in the color of a vertex. Thus even (2) is true. This completes the induction step.
\end{proof}
The above lemma implies that at the end of the restoration,
$\SSS_{\log^* n}$ contains vertices from the top $\IS{(\log^* n)}$
of $\SSS$ and the color of each vertex is also correctly
restored. In the restoration process, we use the data-structure described
in Lemma \ref{lem:dict}
at most poly($n$) times, thus all insert and deletes are successful
with probability $\ge 1-\frac{1}{n^{c}}$ where $c$ is some
constant. Thus, the algorithm succeeds with very high probability.

Since, we are basically running the imaginary $\DFS$ again
to restore $\SSS_{\log^* n}$, the following lemma is immediate.
\begin{lemma}
\label{lem:restorelast}
The time taken to restore $\SSS_{\log^* n}$ is $O(m+n)$ with
high probability.

\end{lemma}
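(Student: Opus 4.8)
The plan is to argue that $\REE(\log^* n)$ is, up to bookkeeping, nothing more than a re-run of the imaginary $\DFS$ from the start vertex $u$ until the top trailer of $\TT_{\log^* n}$ is reached, and therefore inherits the standard $O(m+n)$ running time of $\DFS$. First I would invoke the correctness lemma just proved: $\REE(\log^* n)$ processes exactly the tuples $(v,v.\NXT)$ that sit on the imaginary stack $\SSS$ at the moment of the call, in the same order, recoloring the relevant gray vertices and touching no black vertex. Consequently, the set of vertices visited by $\REE(\log^* n)$ is a subset of the vertices of $G$, each visited a constant number of times (once when it is made gray and pushed, and possibly once more when the scan of its neighborhood resumes), mirroring the push/resume structure of iterative $\DFS$.

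Next I would account for the edge-scanning cost. For each vertex $v$ touched by the restore, the procedure scans $\NN(v)$ to locate the first white neighbor $k$; summed over all such $v$ this is $O\big(\sum_v \DD(v)\big) = O(m)$, exactly as in ordinary $\DFS$, because the ``resume from $\NN(v)[k]$'' step advances a pointer into $\NN(v)$ that never moves backward — so the total work spent scanning any single $v$'s adjacency list across the whole restore is $O(\DD(v)+1)$. The remaining per-vertex work is the insertion into $\SSS_{\log^* n}$ (when $v$ lies in the top $\IS{\log^* n}$), the computation of $\SEG_{\log^* n - 1}(v)$ and $\GRO_{\log^* n-1}(k)$, and the comparison of the running count of processed vertices against $|\SSS|$ to test membership in the top $\IS{\log^* n}$; each of these is a constant number of word operations, and the dictionary insertion is $O(1)$ with probability $\ge 1-1/n^c$ by Lemma \ref{lem:dict}. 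Since at most $O(n)$ insertions occur during the restore, a union bound gives that all of them succeed in $O(1)$ time with probability $\ge 1-1/n^{c-1}$, which is the ``high probability'' qualifier in the statement. Adding up, the restore runs in $O(m+n)$ time with high probability.

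The only genuinely delicate point — and the step I would treat most carefully — is that the re-run terminates where we want it to and does not overshoot: the \textsc{Do}--\textsc{while} loop stops precisely when the current vertex $v$ equals the top of $\TT_{\log^* n}$, and I must be sure that this condition is met and that up to that point the restore has faithfully reconstructed the top $\IS{\log^* n}$. This is exactly what the preceding correctness lemma delivers, so I would lean on it rather than reprove it: since $\TT_{\log^* n}$ holds the top vertex of the highest $\IS{\log^* n}$ still present in $\SSS$ but not in $\SSS_{\log^* n}$, and the restore rebuilds the stack contents tuple-by-tuple in imaginary-$\DFS$ order, the first time the reconstructed prefix reaches that trailer vertex is exactly when $\SSS_{\log^* n}$ has been refilled with the top $\IS{\log^* n}$. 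Everything else is the observation that a bounded-depth re-execution of $\DFS$ costs $O(m+n)$, which needs no further comment.
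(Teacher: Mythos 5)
Your proposal is correct and follows essentially the same route as the paper, which simply declares the lemma immediate because $\REE(\log^* n)$ is a re-run of the imaginary $\DFS$; you merely spell out the standard $O(m+n)$ adjacency-list amortization and the union bound over the $O(n)$ dictionary insertions. The only cosmetic quibble is that the actual pseudocode never ``resumes'' a neighborhood scan --- it just walks down the stack path following first-white-neighbors --- but this only makes the bound easier, not harder.
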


\begin{algorithm}[h]
  \caption{ $\REE(i)$ }
  \label{}

  $(w,w.\NXT) \leftarrow \text{second top element in}\ \TT_i$\;

  $v \leftarrow \NN(w)[w.\NXT-1]$\;

  \Do{ $v$ is not equal to the top of trailer $\TT_i$}
  {

          $(\SEG_{i}(v),l_{v}) \leftarrow \SSS_{i+1}.\GET(v)$\;
        $k \leftarrow l_{v} \frac{\DD(v)}{\LL{i}} $\;

      \If{  $v$ is \IH{j} where $j \le i$}
      {
           $ l_{v} \leftarrow \HH_{j}.\GET(v)$\;

           $ k \leftarrow l_{v} \frac{\DD(v)}{\LL{j-2}}$ or $l_v$ (if $j=2$)

      }

      \For{$k' = k$ to
$k+1+ \frac{m}{n \LL{i}}$  }
      {
          $x \leftarrow \NN(u)[k']$\;
          \If { $x$ is $\GR$ and $x$ in present in $\SSS_{i+1}$}
          {
              $(\SEG_i(x),l_{x})\leftarrow \SSS_{i+1}.\GET(x)$\;

              \If{$\SEG_i(v) = \SEG_i(x)$}
           {
               break\;
           }
          }

      }

      \If{$v$ is $\IL{i}$}
      {
          $\SSS_i.\IN(v,(\SEG_{i-1}(v),\GRO_{i-1}(k')))$\;
      }
      \Else
      {
          $\SSS_i.\IN(v,(\SEG_{i-1}(v),0))$\;
      }
      $v \leftarrow x$\;
      $\CC(v) \leftarrow \WH$\;
  }

  recolor all $\WH$ colored vertex during the above while
loop $\GR$ again\;
\end{algorithm}

Let us now look at $\REE(i)$ where $1 \le i < \log^* n$. Before $\REE(i)$
is called, we will assume that $\SSS_{i+1}$ has {\em enough} elements. This assumption is required as the vertices to be restored in $\SSS_i$ need to be present in $\SSS_{i+1}$.
\begin{itemize}
\item  $\SSS_{i+1}$ contains at least
$\frac{n}{2(\LL{i+1})^{2}}$ vertices (we will prove this crucial
assumption in the analysis)

\end{itemize}

For restoring $\SSS_i$,  we start from the second element from top in 
trailer $\TT_i$ and basically try to run the $\DFS$-like algorithm
from it. Let
$(w,w.\NXT)$ be second element from top in
trailer $\TT_i$. It implies that the first vertex (to be restored) in $\SSS_i$
is $\NN(w)[w.\NXT-1]$.
So, we start a $\DFS$ from $\NN(w)[w.\NXT-1]$ with one simple change (similar
to Elmasry et al.\cite{ElmasryHK15}) --
we change the meaning of $\WH$ and $\GR$
vertices. This is because all the vertices
to be restored in  $\SSS_i$ are $\GR$
and should not be processed once they are added in $\SSS_i$.

Let $v \leftarrow \NN(w)[w.\NXT-1]$.
Since the size of $\SSS_{i+1}$ is sufficiently larger than
$\SSS_i$, $v$ is present in $\SSS_{i+1}$.
Using $\SSS_{i+1}$, we find the $\IS{i}$ number to which
$v$ belongs. In addition, we also want
to find   $v.\NXT-1$. To this end, we check
if $v$ is $\IL{(i+1)}$. If yes, then we can find $l_v =
\GRO_i(v.\NXT-1)$, that is the approximate group in which
$v.\NXT-1$ resides. However,
if $v$ is not $\IL{(i+1)}$, then we use Lemma \ref{lem:caseheavy}
to conclude that $v$ is $\IH{j}$ for some $j \le i+1$, and
we find  $l_v = \GRO_{j-2}(v.\NXT-1)$
where $j \le i+1$. By Lemma \ref{lem:groupsize}, irrespective
of the fact whether $v$ is $\IL{(i+1)}$ or $\IH{j}$, the
group  in which $v.\NXT-1$ lies contains at most 1+$\frac{m}{n \LL{i-1}}$ vertices.
Now comes the most important part of our algorithm. We want
to identify $v.\NXT-1$ correctly once we have found the
group in which $v.\NXT-1$ resides. We will now use the following
lemma which will help us in identifying $v.\NXT-1$.

\begin{lemma}
\label{lem:nextcorrect}
Let $l_v$ be the group number that was found out in $\REE(i)$
procedure while processing $v$. Then $v.\NXT-1$ is the index of the first $\GR$
vertex, say $x$,  in this group such that  $\SEG_{i-1}(x)$ is equal to $\SEG_{i-1}(v)$.
\end{lemma}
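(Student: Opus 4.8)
The plan is to argue, just as in the correctness proof of $\REE(\log^* n)$, that $\REE(i)$ faithfully mimics the imaginary $\DFS$ algorithm, and to locate $v.\NXT-1$ by combining three facts: the group number $l_v$ pins $v.\NXT-1$ to a contiguous block of $\NN(v)$ of size $\le 1+\frac{m}{n\LL{i-1}}$ (Lemma~\ref{lem:groupsize}); the vertex $\NN(v)[v.\NXT-1]$ was $\WH$ when the imaginary $\DFS$ pushed $(v,v.\NXT)$ and is now $\GR$ (it lies in the top $\IS{i}$ being restored, hence still on $\SSS$); and $\SEG_{i-1}(\NN(v)[v.\NXT-1])$ must equal $\SEG_{i-1}(v)$ because $\NN(v)[v.\NXT-1]$ is pushed on $\SSS$ immediately after $(v,v.\NXT)$, so both sit in the same $\IS{(i-1)}$. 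The key step is then to rule out any \emph{earlier} index $k'$ in the same group that also passes the ``$\GR$ and in $\SSS_{i+1}$ and same $\SEG_{i-1}$'' test.

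First I would set up the induction on the vertices restored into $\SSS_i$ from bottom to top, exactly paralleling the lemma preceding Lemma~\ref{lem:restorelast}: assuming that $\REE(i)$ has so far reproduced the same stack entries and the same colors (on non-black vertices) as the imaginary $\DFS$, I process the current $v$. By Observation~\ref{obs:fact} (applied at level $i$), every vertex in $\NN(v)[1\dots v.\NXT-2]$ is gray or black at the moment the imaginary $\DFS$ pushes $(v,v.\NXT)$. Next I would show the scan over $k'=k,\dots,k+1+\frac{m}{n\LL{i}}$ indeed covers the whole group containing $v.\NXT-1$: this is the arithmetic that $l_v\frac{\DD(v)}{\LL{i}}$ (resp.\ the $\HH_j$ variant) is the left endpoint of the group and that the group has size $\le 1+\frac{m}{n\LL{i-1}}\le 1+\frac{m}{n\LL{i}}$, so $v.\NXT-1$ is hit by some $k'$ in range. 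Then comes the heart: for the true index $k'=v.\NXT-1$, the test succeeds, because $x:=\NN(v)[k']$ was white at push time, is gray now, lies in the top $\IS{i}$, hence is present in $\SSS_{i+1}$ (which holds the top two $\IS{(i+1)}$, a superset), and $\SEG_{i-1}(x)=\SEG_{i-1}(v)$ since $x$ is pushed right above $(v,v.\NXT)$ in $\SSS$, so no $\IS{(i-1)}$ boundary can separate them.

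The main obstacle — and the step I would spend the most care on — is showing the scan does not ``fire early'': that no index $k'<v.\NXT-1$ in the same group satisfies all three conditions simultaneously. Here I would invoke the induction hypothesis on colors together with Observation~\ref{obs:fact}: any $y=\NN(v)[k']$ with $k'\le v.\NXT-2$ was already processed by $v$ in the imaginary $\DFS$ before $(v,v.\NXT)$ was pushed, so $y$ is gray or black there. If $y$ is black now it fails the ``$\GR$'' test. If $y$ is gray now, then $y$ is on the imaginary stack $\SSS$ strictly below $(v,v.\NXT)$ (a gray ancestor of $v$); I would argue such a $y$ either is not in the top $\IS{i}$ — hence not in $\SSS_{i+1}$, failing the membership test — or, if it is in the top $\IS{i}$, it lies strictly below $v$ on $\SSS$ and is separated from $v$ by at least the entry $x$, so it occupies a different (lower) $\IS{(i-1)}$, giving $\SEG_{i-1}(y)\neq\SEG_{i-1}(v)$ and failing the last test. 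The one subtlety to nail down is when $y$ and $v$ happen to share an $\IS{(i-1)}$ despite $y$ being below $v$; here I would use that the restoration is run on at most the top \emph{two} $\IS{i}$'s worth of vertices and that $\SSS_{i+1}$'s coarser $\SEG_i$ information, checked in the membership branch, distinguishes these cases — this is precisely where the hierarchy of segment sizes is exploited, and I would phrase the argument so that any potential early match is excluded by one of the three tests. Finally I would note the identification of $x$ with $\NN(v)[v.\NXT-1]$ lets us store $\GRO_{i-1}(k')=\GRO_{i-1}(v.\NXT-1)$ into $\SSS_i$ correctly, closing the induction.
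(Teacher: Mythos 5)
Your overall strategy is the same as the paper's: use Observation~\ref{obs:fact} to conclude that every neighbor of $v$ preceding $\NN(v)[v.\NXT-1]$ was gray or black when the imaginary $\DFS$ pushed $(v,v.\NXT)$, observe that $x=\NN(v)[v.\NXT-1]$ is still on the imaginary stack (hence gray, in the top $\IS{i}$, and with the same segment number as $v$), and then rule out an earlier index in the group firing the test. You are in fact considerably more explicit than the paper, which compresses the exclusion step into the single remark that ``the meaning of $\WH$ and $\GR$ vertices is changed during the restoration.'' However, your handling of the one genuinely hard subcase has a gap, and you acknowledge it without closing it.

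The problematic case is an earlier neighbor $y=\NN(v)[k']$ with $k'<v.\NXT-1$ that was \emph{gray} at push time, i.e.\ an ancestor of $v$ sitting below $v$ on $\SSS$, and that moreover lies inside the \emph{same} top $\IS{i}$ as $v$. Such a $y$ is in $\SSS_{i+1}$ and satisfies $\SEG_i(y)=\SEG_i(v)$, so neither the membership test nor the segment test can exclude it; your attempt to argue $\SEG_{i-1}(y)\neq\SEG_{i-1}(v)$ because $y$ is ``separated from $v$ by at least one entry'' does not work, since a single stack position never forces a segment boundary (and likewise your claim that $x$ and $v$ share a segment ``because no boundary can separate consecutive entries'' is justified for the wrong reason --- they share the top $\IS{i}$ because both are being restored, not because they are adjacent). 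What actually kills this $y$ is the color swap that you mention in the setup but never deploy: $\REE(i)$ walks the top $\IS{i}$ from bottom to top and executes $\CC(\cdot)\leftarrow\WH$ on each vertex after inserting it into $\SSS_i$, so by the time $v$ is processed every ancestor of $v$ inside the top $\IS{i}$ has already been recolored white and fails the ``$x$ is $\GR$'' test; ancestors below the top $\IS{i}$ are the ones eliminated by the segment/membership tests. (Descendants of $v$ above it on $\SSS$ cannot occur at an index $\le v.\NXT-2$ at all, by Observation~\ref{obs:fact}.) With that substitution your case analysis closes and matches the paper's intent.
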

\begin{proof}
We know that $v.\NXT-1$ lies in the group $l_v$. Let $x
\leftarrow \NN(v)[v.\NXT-1]$. We first discuss the properties
of vertex $x$. Since we are restoring the top $\IS{i}$,
$x$ should lie in the same segment as $v$, that is $\SEG_{i}(v)
= \SEG_{i}(x)$. In the imaginary $\DFS$ algorithm, consider
the step at which $v$ discovers $x$. Using Observation \ref{obs:fact},
we claim that at that point $x$ is the first $\WH$ vertex of the group.
Indeed, if there is another white vertex lying before $x$
in $\NN(v)$, then that vertex will be processed first by
the imaginary $\DFS$ algorithm.

Since the meaning  of $\WH$ and $\GR$ vertices are changed during
the restoration, this means that $x$ is the first $\GR$
vertex of the group during the restoration. This completes
our proof.
\end{proof}
The above lemma greatly simplifies our work, we just find
the first $\GR$ vertex $x$ such that $\SEG_{i}(x) = \SEG_{i}(v)$.
Once we have found  $x$, then we insert
$v$ in $\SSS_i$ by calculating all the relevant parameter
and then move on to process $x$. We now find the running
time of $\REE$. We list the  steps in this algorithm that dominates its running time.

\begin{enumerate}
\item Finding the $j$ for which $v$ is $\IH{j}$ (Step 6).

An easy (but sub-optimal space) solution for this problem will be to store this information for each vertex in an array, say $A$, of size $n$. However, the space required by $A$ will be $O(n \log(\log^* n))$ (as $2 \le j \le \log^* n)$. Since we do not have this much space, we use another strategy.

If $v$ is $\IH{2}$, then we can find it in $O(1)$ time. So, assume that $3 \le j \le \log^* n$. If $v$ is $\IH{j}$, then $\frac{m (\LL{j-1})^2}{n}<
\DD(v) \le \frac{m (\LL{j-2})^2}{n}$ or $(\LL{j-1})^2<
\frac{n\DD(v)}{m} \le (\LL{j-2})^2$. We make an array $A$ of size $O(\log^2n)$, such that each cell $k \in ((\LL{j-1})^{2}         ,(\LL{j-2})^{2}]$ has $A[k] = j$. Given any $v$, if the content of the cell $ \frac{n\DD(v)}{m}$
of $A$ is $j$, then    $v$ is $\IH{j}$. Since we  probe $A$ once, the time taken for this step is $O(1)$ time.

Note that the space  taken by the array $A$ is $O(\log^2 n \log(\log^*n))$ which is subsumed in the $O(n)$ notation.

\item Searching for $v.\NXT-1$ (the for loop inside the while loop (step 10-18))

Once we have found the starting vertex of the group (that is $k$) in the
while loop, the time taken in the for loop  is $O\Big(1+\frac{m}{n
\LL{i}}\Big)$. This is due to  Lemma \ref{lem:groupsize}
which states that the group size associated  with $v$ has
$1+ \frac{m}{n\LL{i}}$ vertices. 
\item Recoloring the  vertices (Step 28).

To this end, we should   maintain all the vertices that are colored $\WH$ by our restore algorithm and then enumerate them. Fortunately, there already exists a space-efficient data-structure that does this job.

\begin{lemma} \label{lem:enu}(Succinct Enumerate Dictionary
  \cite{BanerjeeC016})
  A set of elements from a universe of size n can be maintained
  using n + o(n) bits to support insert, delete, search and
  findany operations in constant time. We can  enumerate all
  elements of the set (in no particular order) in O(k +1)
  time where k is the number of elements in the set.
  \end{lemma}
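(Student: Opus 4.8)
The plan is to realize this structure as a plain characteristic bit vector $B[1\dots n]$ --- so that \textsc{search}, \textsc{insert} and \textsc{delete} of a key are trivially $O(1)$ and cost exactly $n$ bits --- together with an $o(n)$-bit navigation layer whose sole job is to let us skip the empty stretches of $B$ during \textsc{findany} and enumeration. Writing $w=\Theta(\log n)$ for the machine word length, I would cut $B$ into blocks of $b=\lceil\log^2 n\rceil$ consecutive bits, call a block \emph{active} if it contains a $1$, and maintain: (i) for each block, a \emph{sub-summary} of $\lceil b/w\rceil=\Theta(\log n)$ bits (one bit per machine word inside the block) marking the nonzero words of that block; and (ii) a doubly linked list of the currently active blocks, represented not by allocated nodes but by two arrays $\mathit{next}[\cdot]$ and $\mathit{prev}[\cdot]$ indexed by block number, together with a head pointer. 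There are $n/b=\Theta(n/\log^2 n)$ blocks, and a sub-summary or a pointer takes $\Theta(\log n)$ bits, so the whole navigation layer occupies $\Theta(n/\log n)=o(n)$ bits, for a total of $n+o(n)$ bits (this follows the approach of \cite{BanerjeeC016}).

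Updates maintain the obvious invariants in $O(1)$ time: \textsc{insert}$(i)$ sets the bit of $i$ in $B$; if the word containing $i$ just became nonzero it sets the corresponding bit of that block's sub-summary; and if the block itself just became active it links the block at the head of the block list. \textsc{delete}$(i)$ undoes these steps, using $\mathit{prev}$ and $\mathit{next}$ to unlink a block in $O(1)$ once it becomes empty --- we know the block index, so no search is needed. For \textsc{findany} we take the head block of the list, read its sub-summary, isolate a set bit to obtain the index of a nonzero word of $B$ inside that block, read that word, and isolate a set bit of it; locating the position of an isolated bit inside one machine word is done with the standard constant-time word-RAM bit tricks (or a table of $2^{\lceil\sqrt{\log n}\,\rceil}$ entries, which is $o(n)$ bits), so \textsc{findany} is $O(1)$, returning ``empty'' exactly when the block list is empty.

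For enumeration I would walk the block list; for each active block use its sub-summary to walk its nonzero words; and for each nonzero word extract all of its set bits one at a time by repeated lowest-set-bit isolation. Because distinct set bits lie in distinct words and distinct active words lie in distinct active blocks, the number of active blocks and the number of nonzero words are each at most $k$, and the bits extracted total $k$; hence the whole walk costs $O(k)$, and when $k=0$ the list is empty and we stop in $O(1)$ time, giving the claimed $O(k+1)$ bound. Correctness is immediate: the list holds exactly the active blocks, each sub-summary holds exactly the nonzero words of its block, and the per-word extraction is exact, so every element of the set is reported exactly once and no other element is reported.

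I expect the one point that needs care is the bookkeeping that keeps \emph{every} auxiliary array simultaneously within $o(n)$ bits: the block length $b$ must be $\omega(\log n)$ so that the $n/b$ block-indexed arrays (the two linked-list arrays and the sub-summaries) are $o(n)$, while each sub-summary must still fit in $O(1)$ machine words so that a set bit inside a block can be found in $O(1)$ time --- and $b=\Theta(\log^2 n)$ reconciles these two constraints. Everything else --- the $O(1)$-time updates, the array-based linked list, the word-level bit extraction --- is routine word-RAM manipulation, and initialisation (zeroing $B$, $\mathit{next}$, $\mathit{prev}$ and the sub-summaries, and filling the tiny lookup table) costs $O(n/\log n)$ time, which is subsumed.
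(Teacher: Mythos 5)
The paper does not prove this lemma at all: it is imported verbatim as a black-box result from Banerjee et al.\ \cite{BanerjeeC016}, so there is no in-paper argument to compare against. Your construction is a correct, self-contained realization of the claimed bounds, and it is in the same spirit as the cited work: a characteristic bit vector of exactly $n$ bits carries \textsc{search}/\textsc{insert}/\textsc{delete} in $O(1)$, and the $o(n)$-bit overlay (per-block word-level sub-summaries plus an array-backed doubly linked list of active blocks) is exactly what is needed to make \textsc{findany} constant-time and to charge the enumeration walk to the $k$ present elements, since the number of active blocks and of nonzero words are each at most $k$. The space accounting ($n/\log^2 n$ blocks times $\Theta(\log n)$ bits of pointers and summaries) and the $O(k+1)$ enumeration bound both check out. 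The only point worth tightening is the aside about locating a set bit within a word: a lookup table with $2^{\lceil\sqrt{\log n}\rceil}$ entries indexed by sub-words would naively cost $\Theta(\sqrt{\log n})$ probes per word; you should instead invoke the standard $O(1)$-time least-significant-bit computation (e.g.\ the de Bruijn multiplication trick with an $O(\log n)$-entry table, or the model's constant-time word operations), which you already gesture at with ``standard constant-time word-RAM bit tricks.'' With that clarification the proof is complete and, if anything, more explicit than what the paper provides, since the paper provides nothing.
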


We implement a enumerate dictionary in which we add all the vertices
that are colored $\WH$ by our restore algorithm. At the end of the while
loop of the restore algorithm, we use the enumerate dictionary to enumerate all
such vertices. We recolor each such vertex $\GR$ again and delete it from the
succinct dictionary. Using the above lemma, the extra space taken by the enumerate dictionary is $O(n)$.
\end{enumerate}

We now put everything together to calculate the total running time of $\REE(i)$.
Since, we restore vertices in  topmost
$\IS{i}$ only, we process only $\frac{n}{(\LL{i})^2}$ vertices
in the while loop of $\REE(i)$. Thus the while loop of $\REE(i)$ take $O\Big(\Big(1+\frac{m}{n \LL{i}}\Big) \times \frac{n}{(\LL{i})^2}\Big)$
time. Also, time taken by the recoloring step is proportional
to the number of vertices processed by $\REE(i)$, that is $O(\frac{n}{(\LL{i})^2})$.

In the restoration process, we use the data-structure
described
in Lemma \ref{lem:dict}
at most poly($n$) times, thus all insert and deletes are
successful
with probability $\ge 1-\frac{1}{n^{c}}$ where $c$ is some
constant. Thus, the algorithm succeeds with very high probability.

\begin{lemma}
\label{lem:restoreempty}
The time taken to restore $\SSS_i$ in $\REE(i)$ is $O\Big(\Big(1+\frac{m}{n \LL{i}}\Big) \times \frac{n}{(\LL{i})^2}\Big)$
with high probability.
\end{lemma}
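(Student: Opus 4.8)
The plan is to carefully account for all the work done inside a single invocation of $\REE(i)$, piece by piece, following the enumeration already laid out in the text. The key observation that drives everything is that $\REE(i)$ restores only the vertices of the topmost $\IS{i}$, and by the counting established in the Preliminaries section this segment contains at most $\frac{n}{(\LL{i})^2}$ vertices of the imaginary stack $\SSS$. Hence the outer \textsc{Do-while} loop of $\REE(i)$ executes at most $\frac{n}{(\LL{i})^2}$ iterations, each iteration processing one vertex $v$ that will be (re-)inserted into $\SSS_i$.

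The proof then bounds the cost of a single iteration. First I would argue that the lookup of $\SSS_{i+1}.\GET(v)$, the membership test ``$v$ present in $\SSS_{i+1}$'', and the insertion $\SSS_i.\IN(v,\cdot)$ each take $O(1)$ time with high probability, invoking Lemma~\ref{lem:dict} (the succinct dynamic dictionary); since these dictionary operations are performed $\mathrm{poly}(n)$ times in total, a union bound keeps the failure probability at $O(1/n^c)$. Next, determining the index $j$ for which $v$ is $\IH{j}$ takes $O(1)$ time via the precomputed array $A$ of size $O(\log^2 n)$ described in item~1 of the running-time discussion; the space for $A$ is $O(\log^2 n \log(\log^* n)) = O(n)$ and so is irrelevant to the time bound. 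The dominant per-iteration cost is the \textsc{For} loop that scans the group of $\NN(v)$ in which $v.\NXT-1$ lies (looking for the first $\GR$ vertex $x$ with $\SEG_{i-1}(x) = \SEG_{i-1}(v)$, whose correctness is Lemma~\ref{lem:nextcorrect}): by Lemma~\ref{lem:groupsize} this group has at most $1 + \frac{m}{n\LL{i-1}}$ vertices, and each probe inside it costs $O(1)$ (an array access into $\NN(v)$ plus, possibly, one $\SSS_{i+1}.\GET$ call, which is $O(1)$ w.h.p.). I would be slightly careful here to match the statement's $\frac{m}{n\LL{i}}$ against Lemma~\ref{lem:groupsize}'s $\frac{m}{n\LL{i-1}}$: since $\LL{i} \le \LL{i-1}$ we have $\frac{m}{n\LL{i-1}} \le \frac{m}{n\LL{i}}$, so the claimed bound is valid (indeed slightly loose), which is fine. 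Multiplying the per-iteration cost $O\!\left(1 + \frac{m}{n\LL{i}}\right)$ by the $\frac{n}{(\LL{i})^2}$ iterations gives $O\!\left(\left(1 + \frac{m}{n\LL{i}}\right)\frac{n}{(\LL{i})^2}\right)$.

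Finally I would add the remaining two costs. The recoloring step (item~3): the set of vertices that $\REE(i)$ temporarily marked $\WH$ has size at most the number of vertices processed, namely $\frac{n}{(\LL{i})^2}$; using the Succinct Enumerate Dictionary of Lemma~\ref{lem:enu}, inserting these vertices and then enumerating and recoloring them costs $O\!\left(\frac{n}{(\LL{i})^2}\right)$ total, and the extra space is $O(n)$. The initial setup (reading the second-top element of $\TT_i$, setting $v \leftarrow \NN(w)[w.\NXT-1]$) is $O(1)$. Summing: $O(1) + O\!\left(\left(1 + \frac{m}{n\LL{i}}\right)\frac{n}{(\LL{i})^2}\right) + O\!\left(\frac{n}{(\LL{i})^2}\right) = O\!\left(\left(1 + \frac{m}{n\LL{i}}\right)\frac{n}{(\LL{i})^2}\right)$, which is the claimed bound, and all randomness enters only through $\mathrm{poly}(n)$ applications of Lemma~\ref{lem:dict}, so the bound holds with probability $\ge 1 - 1/n^c$.

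I do not expect a serious obstacle here; this is essentially a bookkeeping lemma. The one place to be attentive is making sure that every operation inside the inner \textsc{For} loop really is $O(1)$ w.h.p.\ (in particular that the conditional $\SSS_{i+1}.\GET(x)$ call is charged correctly) and that the group-size bound from Lemma~\ref{lem:groupsize} is applied with the right index, as noted above; the genuinely substantive claim — namely that over the \emph{entire} run the restorations of all $\SSS_i$ sum to $O(m + n\log^* n)$ — is a separate argument that relies on bounding how \emph{often} $\REE(i)$ is called, and is not part of this per-call lemma.
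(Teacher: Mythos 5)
Your proposal is correct and follows essentially the same route as the paper: bound the number of iterations of the outer loop by $\frac{n}{(\LL{i})^2}$ (only the topmost $\IS{i}$ is restored), bound the per-iteration cost by the group scan of size $O\big(1+\frac{m}{n\LL{i}}\big)$ via Lemma~\ref{lem:groupsize}, handle the $\IH{j}$ lookup with the $O(\log^2 n)$-size array $A$, charge recoloring via the enumerate dictionary, and absorb the randomness by a union bound over $\mathrm{poly}(n)$ dictionary operations. Your side remark about the $\LL{i}$ versus $\LL{i-1}$ index is a fair observation (the lookups in $\REE(i)$ are actually into $\SSS_{i+1}$, so Lemma~\ref{lem:groupsize} applied with index $i+1$ gives $\frac{m}{n\LL{i}}$ directly), but either way the claimed bound holds.
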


\begin{algorithm}[h]
  \caption{ $\REF(i)$ }
  \label{}

  $(w,w.\NXT) \leftarrow \text{ top element in}\ \TT_i$\;

  $v \leftarrow \NN(w)[w.\NXT-1]$\;

  $counter \leftarrow 0$\;

  \Do{true}
  {

          $(\SEG_{i}(v),l_{v}) \leftarrow \SSS_{i+1}.\GET(v)$\;
        $k \leftarrow l_{v} \frac{\DD(v)}{\LL{i}} $\;

      \If{  $v$ is \IH{j} where $j \le i+1$}
      {
           $ l_{v} \leftarrow \HH_{j}.\GET(v)$\;

           $ k \leftarrow l_{v} \frac{\DD(v)}{\LL{j-2}}$ or $l_v$ (if $j=2)$\;

      }

      \For{$k' = k$ to
$k+1+ \frac{m}{n \LL{i}}$  }
      {
          $x \leftarrow \NN(u)[k']$\;
          \If { $x$ is $\GR$ and $x$ in present in $\SSS_{i+1}$}
          {
              $(\SEG_i(x),l_{x})\leftarrow \SSS_{i+1}.\GET(x)$\;

              \If{$\SEG_i(v) = \SEG_i(x)$}
           {
               break\;
           }
          }

      }

          $\SSS_i.\DL(v)$\;

        $\CC(v) \leftarrow \WH$\;
        \If{ $counter = \frac{n}{(\LL{i})^2}$}
        {
            Add $(v,k')$ on the top of stack $\TT_i$\;
            break;
           
        }

      $v \leftarrow x$\;

  }

  recolor all $\WH$ colored vertex during the above while
loop $\GR$ again\;
\end{algorithm}

Our last procedure $\REF(i)$ is called when $\SSS_i$ is full, that is, it contains vertices from the top two $\IS{i}$s of $\SSS$. The aim of $\REF(i)$ is to remove the vertices from the second top most $\IS{i}$ of $\SSS$. Thus, at the end of $\REF(i)$, $\SSS_i$ contains vertices of top $\IS{i}$ of $\SSS$. The procedure $\REF(\log^* n)$ is same as $\REE(\log^* n)$. For $i < \log^* n$, the procedure $\REF(i)$ is  similar to $\REE(i)$, we describe it next. 

We start with the top-most element of the trailer, say $w$.  Thus, the first vertex from the second topmost segment of $\SSS_i$ is $v \leftarrow \NN(w)[w.\NXT-1]$. Thus, we know that we have to delete $v$ from $\SSS_i$. However, before we delete $v$, we first find $\NN(v)[v.\NXT-1]$. The process to find this is same as done in $\REE(i)$. Then, we delete $v$ from $\SSS_i$ and set $v \leftarrow \NN(v)[v.\NXT-1]$. This process is carried out till we process all the vertices in the second topmost segment of $\SSS$. Thus, after our counter hits $\frac{n}{(\LL{i})^2}$, we have deleted all the vertices from the second topmost segment of $\SSS$ from $\SSS_i$.  Before we finish, we push the last processed vertex --- which is the trailer vertex of the second topmost segment of $\SSS$ ---  on top of trailer $\TT_i$.

The time taken by $\REF(i)$ is same as the time taken by $\REE(i)$. This is because the process to find $v.\NXT-1$ (given  $v$) is same for both the procedures. 
Also, the total number of vertices processed in both the procedures is same, that is $\frac{n}{(\LL{i})^2}$.
Thus, the time taken to restore
$\SSS_i$ in $\REF(i)$ is also $O\Big(\Big(1+\frac{m}{n \LL{i}}\Big) \times \frac{n}{(\LL{i})^2}\Big)$
with high probability.

\begin{lemma}
\label{lem:restorefull}
The time taken to restore $\SSS_i$ in $\REF(i)$ is $O\Big(\Big(1+\frac{m}{n
\LL{i}}\Big) \times \frac{n}{(\LL{i})^2}\Big)$
with high probability.
\end{lemma}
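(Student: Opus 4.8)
The plan is to show that a single call to $\REF(i)$ performs essentially the same per-vertex work as $\REE(i)$ and touches the same number of vertices, so that the bound of Lemma~\ref{lem:restoreempty} carries over verbatim. First I would pin down the number of iterations of the outer \textsc{do}-loop of $\REF(i)$: each iteration processes one vertex $v$ of the second-topmost $\IS{i}$ of $\SSS$ and increments $counter$, and the loop halts precisely when $counter = \frac{n}{(\LL{i})^{2}}$. Since an $\IS{i}$ contains $\frac{n}{(\LL{i})^{2}}$ vertices, the loop runs exactly $\frac{n}{(\LL{i})^{2}}$ times --- the same count as the while-loop of $\REE(i)$.

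Next I would bound the cost of one iteration, which is dominated by the search for $v.\NXT-1$. The call $\SSS_{i+1}.\GET(v)$ returning $(\SEG_{i}(v),l_{v})$ is $O(1)$ with high probability by Lemma~\ref{lem:dict}; it is well defined because we may assume (the standing precondition, established in the analysis) that $\SSS_{i+1}$ holds at least $\frac{n}{2(\LL{i+1})^{2}}$ vertices and hence every vertex of the top two $\IS{i}$s of $\SSS$. Finding the $j$ for which $v$ is $\IH{j}$ costs $O(1)$ via the precomputed table $A$ of size $O(\log^{2} n)$ introduced in the analysis of $\REE(i)$, and the subsequent $\HH_{j}.\GET(v)$ is again $O(1)$ w.h.p.\ by Lemma~\ref{lem:dict}. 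The inner \textsc{for}-loop scans at most $1+\frac{m}{n\LL{i}}$ positions of $\NN(v)$ --- this is exactly Lemma~\ref{lem:groupsize}, which guarantees that the group associated with any vertex in $\SSS_{i}$ has at most $1+\frac{m}{n\LL{i}}$ elements --- and each scanned position costs one $O(1)$-w.h.p.\ lookup in $\SSS_{i+1}$; correctness of the search (that it returns $v.\NXT-1$) is the analogue of Lemma~\ref{lem:nextcorrect} and holds because the procedure is textually identical to the one in $\REE(i)$. Finally $\SSS_{i}.\DL(v)$ and recoloring $v$ to $\WH$ are $O(1)$. Hence one iteration runs in $O(1+\frac{m}{n\LL{i}})$ time with high probability.

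Multiplying the per-iteration cost by the iteration count yields $O\!\left(\left(1+\frac{m}{n\LL{i}}\right)\times\frac{n}{(\LL{i})^{2}}\right)$. It remains to account for the post-loop recoloring: exactly as in $\REE(i)$, every vertex recolored $\WH$ inside the loop is simultaneously inserted into a succinct enumerate dictionary (Lemma~\ref{lem:enu}), and afterwards we enumerate and recolor those at most $\frac{n}{(\LL{i})^{2}}$ vertices in linear time, which is dominated by the main cost. A union bound over the polynomially many dictionary operations performed over the entire algorithm upgrades all the $O(1)$ bounds to hold simultaneously with probability $1-1/n^{c'}$, giving the claimed high-probability running time and completing Lemma~\ref{lem:restorefull}.

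The step I expect to be the genuine obstacle lies not in this lemma but in justifying its hypothesis --- that $\REF(i)$ (like $\REE(i)$) is never invoked while $\SSS_{i+1}$ is too small, so that all the $\GET$ calls return valid data. That invariant is deferred to the analysis section; here it is assumed, after which the running-time accounting above is routine, being a nearly line-by-line transcription of the $\REE(i)$ analysis.
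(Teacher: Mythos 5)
Your proposal is correct and follows essentially the same route as the paper, which simply observes that $\REF(i)$ finds $v.\NXT-1$ by the identical procedure as $\REE(i)$ and processes the same number, $\frac{n}{(\LL{i})^2}$, of vertices, so the bound of Lemma~\ref{lem:restoreempty} carries over. Your write-up just makes the per-iteration accounting (dictionary lookups, the table for locating $j$, the group-size bound from Lemma~\ref{lem:groupsize}, and the recoloring step) explicit rather than citing it by analogy.
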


\section{Analysis}
\label{sec:analysis}
\subsection{Correctness of our Algorithm}
To prove the correctness, we just need to show our assumption
during the restoration procedure is true, that is $\SSS_{i+1}$
contains sufficient elements when $\SSS_{i}$ is restored.

\begin{lemma}
When $\SSS_{i}$ is restored, $\SSS_{i+1}$ contains at least top    $\frac{n}{2(\LL{i+1})^2}$
vertices of imaginary stack $\SSS$, where $1 \le i \le \log^* n$.
\end{lemma}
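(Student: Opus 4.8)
The plan is to prove the invariant by induction on the sequence of operations performed by the algorithm, tracking for each pair $(\SSS_i, \SSS_{i+1})$ how the sizes of the two dictionaries evolve relative to each other. The key structural fact is that every element that is ever present in $\SSS_i$ is, at that moment, also present in $\SSS_{i+1}$: both dictionaries receive the same $\IN$ calls (from the main $\DFS$ loop, Algorithm~\ref{alg:maindfs}, which inserts into \emph{all} $\SSS_j$'s) and the same $\DL$ calls, the only difference being how often each is independently rebuilt by $\REE$/$\REF$. Since $\SSS_i$ holds at most top two $\IS{i}$s (at most $\frac{2n}{(\LL{i})^2}$ vertices) and $\SSS_{i+1}$ holds at most top two $\IS{(i+1)}$s (at most $\frac{2n}{(\LL{i+1})^2}$ vertices), and since $\LL{i+1}$ is much smaller than $\LL{i}$, a single $\IS{(i+1)}$ already contains many full $\IS{i}$s --- precisely $\big(\frac{\LL{i}}{\LL{i+1}}\big)^2$ of them. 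This disparity in granularity is what makes the invariant possible.

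First I would pin down when $\REE(i)$ is actually invoked: by the $\DL$ pseudocode, $\REE(i)$ fires only when $|\SSS_i| < \frac{n}{2(\LL{i-1})^2}$, i.e. when $\SSS_i$ has dropped below half of one $\IS{i}$'s worth of vertices, and it then rebuilds $\SSS_i$ up to the top $\frac{n}{(\LL{i})^2}$ vertices (one full $\IS{i}$). Symmetrically $\REF(i)$ fires when $|\SSS_i|$ reaches $\frac{2n}{(\LL{i})^2}$ and trims it back to $\frac{n}{(\LL{i})^2}$. So between consecutive rebuilds of $\SSS_i$, the top of the imaginary stack $\SSS$ must move by at least $\approx \frac{n}{2(\LL{i})^2}$ vertices (either $\frac{n}{(\LL{i})^2}$ vertices get pushed past the full mark, or $\frac{n}{2(\LL{i})^2}$ vertices get popped below the half-of-a-segment mark). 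Now I want to show that whenever $\SSS_i$ is rebuilt, $\SSS_{i+1}$ still has at least $\frac{n}{2(\LL{i+1})^2}$ vertices. The point is that $\SSS_{i+1}$ is rebuilt to contain a full $\IS{(i+1)}$, namely $\frac{n}{(\LL{i+1})^2}$ vertices, and $\SSS_{i+1}$ only drops below $\frac{n}{2(\LL{i+1})^2}$ after the top of $\SSS$ has receded by at least $\frac{n}{2(\LL{i+1})^2} \gg \frac{n}{2(\LL{i})^2}$ vertices; but within that window the dictionary $\SSS_i$ would itself have triggered its own $\REE$/$\REF$ many times, and — crucially — whenever $\SSS_{i+1}$ is itself rebuilt it is restored to a full segment. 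I would formalize this by a potential-style argument: define $\phi = |\SSS_{i+1}| - (\text{number of vertices below the current top of }\SSS\text{ that are above the bottom of }\SSS_{i+1}\text{'s coverage range})$, show $\phi$ stays pinned appropriately, and read off that $|\SSS_{i+1}| \ge \frac{n}{2(\LL{i+1})^2}$ at every moment $\REE(i)$ or $\REF(i)$ could be called.

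The cleanest way to organize this is an outer induction on $i$ from $\log^* n$ downward: the base case $i = \log^* n$ is vacuous / handled by Lemma~\ref{lem:restorelast} since $\REE(\log^* n)$ rebuilds from scratch and needs nothing above it; for the inductive step, assuming $\SSS_{i+2}$ always has enough elements to rebuild $\SSS_{i+1}$ correctly, I argue that $\SSS_{i+1}$ is always rebuilt to a full $\IS{(i+1)}$ and hence its size oscillates in $[\frac{n}{2(\LL{i+1})^2}, \frac{2n}{(\LL{i+1})^2}]$ except transiently, and that every call to $\REE(i)$/$\REF(i)$ happens at a "transient-free" moment for $\SSS_{i+1}$ because the events that change $|\SSS_i|$ past its thresholds are far coarser than those that change $|\SSS_{i+1}|$. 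I expect the main obstacle to be the bookkeeping of \emph{boundary effects}: the topmost $\IS{i}$ (or $\IS{(i+1)}$) may be only partially full, so "a full segment" is an overstatement near the top of $\SSS$, and I must check that in that regime $\SSS_i$ being small is matched by $\SSS_{i+1}$ literally containing \emph{all} of $\SSS$ (so "$\ge \frac{n}{2(\LL{i+1})^2}$ vertices" should really be stated as "$\ge \min\{\frac{n}{2(\LL{i+1})^2}, |\SSS|\}$", or one argues that when $\SSS$ is that short the invariant is trivially met because everything fits). Handling this corner correctly, and making the "coarser vs. finer threshold" comparison rigorous rather than hand-wavy, is where the real work lies.
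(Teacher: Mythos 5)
Your overall skeleton matches the paper's: a downward induction on $i$ from $\log^* n$ to $1$, with the base case handled by the fact that $\REE(\log^* n)$ rebuilds from scratch and needs no dictionary above it, and an inductive step asserting that $\SSS_{i+1}$ is always kept above its lower threshold by its own restore mechanism. The base case and the general shape of the inductive step are fine.

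The gap is in how you justify the crucial step --- that at the \emph{exact moment} $\REE(i)$ or $\REF(i)$ fires, $\SSS_{i+1}$ is not itself in a ``transient'' below-threshold state. You defer this to an unexecuted potential-function argument and to a heuristic that the threshold-crossing events of $\SSS_i$ are ``coarser'' than those of $\SSS_{i+1}$. That heuristic is not the mechanism the algorithm relies on, and by itself it does not rule out the bad moment: both dictionaries receive exactly one $\IN$/$\DL$ per push/pop of the imaginary stack, so one must still argue about the relative order in which the two size checks are performed within a single push/pop. The paper's proof rests on one concrete observation that your proposal never states: in Algorithm~\ref{alg:maindfs} the inner for-loops run $i$ from $\log^* n$ \emph{down} to $1$, so $\IN(v,i+1)$ and $\DL(v,i+1)$ --- each of which \emph{begins} by testing whether $\SSS_{i+1}$ is too full or too empty and, if so, restoring it to a full $\IS{(i+1)}$ --- are always executed before $\IN(v,i)$ and $\DL(v,i)$. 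Hence whenever $\SSS_i$'s restore is triggered, $\SSS_{i+1}$ has already been checked and, if necessary, replenished in that very iteration (correctly so, by the induction hypothesis applied to $\SSS_{i+2}$). With that ordering in hand the ``transient'' worry disappears and no potential function is needed; without it, the coarseness comparison you propose would have to be turned into a genuinely delicate accounting of when each threshold is crossed relative to the other, which is precisely the work your proposal leaves undone. Your remark about the boundary case (when the imaginary stack is too short for a full $\IS{(i+1)}$) is a legitimate point that the paper glosses over; it is handled in the pseudocode by the guard that $\TT_i$ must have at least two elements before $\REE(i)$ is called.
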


\begin{proof}
First, we note a crucial aspect of our algorithm. In 
Algorithm \ref{alg:maindfs}, 
$\IN$ or $\DL$ occurs in $\SSS_{i+1}$ before $\SSS_{i}$. 

We will now prove the lemma by induction on $i$ where $i$ decreases from $\log^* n$ to $1$. Let us first show
the base case, that is $\SSS_{\log^* n}$ always contains $\frac{n}{2\alpha^2}$ elements.  We have already seen that  $\SSS_{\log^* n}$
is correctly restored if it either becomes full or empty.
 So, $\SSS_{\log^* n}$ always contains top $\frac{n}{2\alpha^2}$
of the imaginary stack $\SSS$. 

Now, using induction hypothesis, we assume that all stacks $S_{j}$ ($ i+1\le j \le log^*n$) contains at least top    $\frac{n}{2(\LL{j})^2}$ vertices of the imaginary stack $S$. Now we will prove the statement of the lemma for stack $S_i$.

   We will use the fact that we $\IN$\ or $\DL$ in $\SSS_{i+1}$
before $\SSS_{i}$. Thus, whenever we are restoring $\SSS_{i}$,
(using induction hypothesis) $\SSS_{i+1}$  contains top $\frac{n}{2(\LL{i+1})^2}$
vertices of the imaginary stack $\SSS$.

In order to restore $\SSS_i$ correctly,
the only non-trivial requirement was that $\SSS_{i+1}$ contains
enough vertices during the restoration of $\SSS_i$.
Thus, we claim  $\SSS_i$ is always restored correctly. This completes the correctness of
the restore algorithm.
\end{proof}

\subsection{Space taken by our Algorithm}
We now calculate the space taken by  our algorithm. We list all our major data-structures and calculate their space.

\begin{enumerate}
\item  $\CC$ array

The $\CC$ array  is of size $n$ and each cell contains only three colors $\BL, \GR,$ or $\WH$. Thus each cell takes $2$ bits. Thus, the space taken by the $\CC$ array is $O(n)$.

\item Stack $\SSS_1$

$\SSS_1$ contains vertices of at most 2 segments of the imaginary segment $\SSS$. Thus, it contains at most $\frac{2n}{(\log n)^2}$ vertices. Each entry of the stack is of size $O(\log n)$. Thus, the total space taken by $\SSS_1$ is $O\Big(\frac{n}{\log n}\Big)$.

\item Dynamic Dictionary $\SSS_i$ ($2 \le i \le \log^* n$)

  Since
  $\SSS_i$ stores vertices from at most top two $\IS{i}$ of
  the imaginary stack $\SSS$, the number of vertices in $\SSS_i$
  is at most $\frac{2n}{(\LL{i})^2}$. In Section \ref{sec:info}, we saw that  the  information associated with
  each vertex of $\SSS_i$ is $O(\LL{i}$). Using Lemma \ref{lem:dict},
  the space taken by $\SSS_i$ is $\frac{2n}{(\LL{i})^{2}} \times
  \log \Big( \frac{n}{2n/\LL{i}}\Big) + \frac{2n}{(\LL{i})^{2}}
  \times \LL{i} = O\Big(\frac{n}{\LL{i}}\Big)$.  Thus, the cumulative  size  all $\SSS_i$'s is of  $ \sum_{i=2}^{\log^* n} O\Big(\frac{n}{\LL{i}}\Big) =O(n)$ bits.
  
  \item Trailers

   Since each $\IS{i}$ contains $\frac{n}{(\LL{i})^2}$ vertices, the total number of $\IS{i}$ is $O((\LL{i})^2)$. Thus, the number of elements in $\TT_i$ is $\le O((\LL{i})^2$).
  In each cell $\TT_i$, we explicitly store the entry $(v,v.\NXT)$.
  The total size of $\TT_i$ is thus $O(\log n (\LL{i})^{2})$ bits.
  The cumulative size of all $\TT_i$ is thus $O(\log n\sum_{i=1}^{\log^* n}
  (\LL{i})^{2})$ bits which are very small compared to our claimed
  space of $O(n)$ bits.
  
\item Dictionary for Heavy vertices, $\HH_i$ ($2 \le i \le \log^* n)$

We store the group number of $v$ in a   dynamic dictionary
  $\HH_i$, that is $\GRO_{i-2}(v.\NXT-1)$. Since we divide
  $\DD(v)$ into groups of size $\frac{\DD(v)}{(\LL{i-2})^{3}}$,
  the total number of groups is ($\LL{i-2})^{3}$.  This implies
  that total space required to represent the group number
  per cell in $\HH_i$ is $3\LL{i-1}$ bits. Also, by definition, each vertex in $\HH_i$ has degree $\ge \frac{m (\LL{i-1})^2}{n}$. Thus, the total number of vertices in $\HH_i$ can at most be $ O\Big(\frac{2n}{(\LL{i-1})^2}\Big)$. 

Using Lemma \ref{lem:dict},  the 
space
  taken for  $\HH_i$ is $\frac{2n}{(\LL{i-1})^{2}}
\times
  \log \Big( \frac{n}{2n/\LL{i-1}}\Big) + \frac{2n}{(\LL{i-1})^{2}}
  \times 3\LL{i-1} =O\Big(\frac{n}{\LL{i}}\Big)$.  Thus the cumulative 
size of all $\HH_i$'s is  $ \sum_{i=2}^{\log^* n} O\Big(\frac{n}{\LL{i}}\Big)
=O(n)$ bits.
\end{enumerate}

The reader can check that the total size of our algorithm is $O(n)$. We now find the total running time of our algorithm. 

\subsection{Running Time}
Using Lemma \ref{lem:maindfs}, we know that our main $\DFS$ algorithm (Algorithm \ref{alg:maindfs}) takes $O( m+ n\log^* n )$ time. The $n \log^*n$ term is due to the fact that we call $\IN$ and $\DL$ procedure at most $n \log^* n$ times in our algorithm. Except the restoration part, the $\IN$ and $\DL$ procedure takes $O(1)$ time (Lemma \ref{lem:ins_del_O1}). Thus the total running time of our algorithm (except the restoration procedure) is $O(m +n \log^* n)$.
To complete the analysis, we  need to find the total running time of our restore algorithm.

 Using Lemma  \ref{lem:restoreempty} and \ref{lem:restorefull},  the time taken to restore $\SSS_i$ ($2 \le i \le \log^* n)$  is  $O\Big(\Big(1+\frac{m}{n \LL{i-1}}\Big) \times \frac{n}{(\LL{i})^2}\Big)$.

We will count the number of times $\SSS_i$ is restored after it restored for the first time (this is just to simplify the analysis).
Whenever $\SSS_i$ is restored via $\REF$, take a look at
last time $\SSS_i$ was restored \footnote{This is the reason we left out the first restoration, as given any restoration we want to look back at the step when the previous restoration happened.}. At that point there were
 exactly$\ \frac{n}{(\LL{i})^2}$ elements in $\SSS_i$.  Thus, at least $\frac{n}{(\LL{i})^2}$
vertices must be  freshly added to $\SSS_i$. 
All these freshly added vertices must have changed their color
from $\WH$ to $\GR$. Since a vertex can change its color
from $\WH$ to $\GR$ only once in our $\DFS$ algorithm (when not processed in the restore procedure),
     $\SSS_i$ can be restored via $\REF$ at most $(\LL{i})^2$
times. Similarly, if $\SSS_i$ is restored via $\REE$, take
a look at the step at which it was restored previously.
At that time, $\SSS_i$ had exactly $\frac{n}{(\LL{i})^2}$. This implies
that at least  $\frac{n}{2(\LL{i})^2}$ have been deleted from $\SSS_i$. The only reason for deleting a vertex (when not processing it in a restore procedure) is that it has turned black. Since a vertex can
change its color from $\GR$ to $\BL$ only once in our $\DFS$ algorithm (when
not processed in the restore procedure), the total number of times $\SSS_i$ is
restored via $\REE$ is $O((\LL{i})^2)$. Thus, the total time
taken in restoring $\SSS_i$'s is as follows:

\begin{enumerate}
\item $i = \log^* n$

Remember that $\LL{\log^* n} = \alpha$ where $\alpha$ is some constant. Using Lemma \ref{lem:restorelast}, the time taken  to restore
$\SSS_{\log^* n}$
is $O(m+n)$. Thus the time taken for all  restorations of $\SSS_{\log^* n}$ is $O(\alpha^2(m+n)) = O(m+n)$.

\item $1\le i < \log^* n$

Using Lemma  \ref{lem:restoreempty} and \ref{lem:restorefull},
 the time taken to restore $\SSS_i$ ($2 \le i \le \log^*
n)$  is  $O\Big(\Big(1+\frac{m}{n \LL{i-1}}\Big) \times \frac{n}{(\LL{i})^2}\Big)$.
Thus the time taken for all the restoration
of $\SSS_i$ is $O\Big(\Big(1+\frac{m}{n \LL{i}}\Big) * \frac{n}{(\LL{i})^2} \times (\LL{i})^2 \Big) = O\Big(n+\frac{m}{ \LL{i}}\Big)$.  Hence, the total time taken to restore all $\SSS_i$'s ($1 \le i\le \log^* n$) 
is $O\sum_{i=1}^{\log^* n-1}\Big(n+ \frac{m}{\LL{i}} \Big) = O(n \log^* n+m)$. 
\end{enumerate}
Thus, the total time taken by our algorithm is $O(m + n \log^* n)$. This proves our main result, that is Theorem \ref{thm:main}.

\bibliographystyle{plain}
\bibliography{paper}
\end{document}